\newtheorem{theorem}{Theorem}
\newtheorem{lemma}[theorem]{Lemma}
\newtheorem{remark}[theorem]{Remark}
\newcommand{\PIH}{$\mathsf{PIH}$\xspace}
\newcommand{\CSP}{$\mathsf{CSP}$\xspace}
\newcommand{\W}{$\mathsf{W}$}
\newcommand{\E}{\mathbb{E}}
\newcommand{\poly}{\mathsf{poly}}
\newcommand{\PCP}{$\mathsf{PCP}$\xspace}
\newcommand{\FPT}{$\mathsf{FPT}$\xspace}
\newcommand{\NP}{$\mathsf{NP}$\xspace}
\newcommand{\calF}{\mathcal{F}}
\newcommand{\calC}{\mathcal{C}}
\newcommand{\U}{\mathcal{U}}
\newcommand{\CS}{\mathcal{S}}
\newcommand{\opt}{\mathsf{OPT}}
\newcommand{\ind}{\mathbf{1}}
\newcommand{\val}{\mathrm{val}}
\newcommand{\maxcover}{\mbox{\sf MaxCover}\xspace}
\newcommand{\supp}{\mathrm{supp}}
\newcommand{\improv}{\mathsf{improv}}
\newcommand{\cost}{\mathsf{cost}}
\renewcommand{\tilde}{\widetilde}
\newcommand{\eps}{\varepsilon}
\newcommand{\N}{\mathbb{N}}
\newcommand{\mc}{$k$-$\mathsf{maxcoverage}$\xspace}
\newcommand{\gapcsp}{\mathsf{Gap2CSP}\xspace}
\newcommand{\gapvcsp}{\mathsf{Gap2VCSP}\xspace}
\newcommand{\gapmaxcpv}{\mathsf{GapMax}$k$\mathsf{Cov}\xspace}
\newcommand{\gapmaxcpvbd}{\mathsf{GapMax}$k$\mathsf{CovSmallUni}\xspace}
\newcommand{\kmed}{{$k$-median}\xspace}
\newcommand{\calI}{\mathcal{I}}
\title{{ On Equivalence of  Parameterized Inapproximability of   $k$-Median, $k$-Max-Coverage, and  2-CSP}}
\date{}
\author{Karthik C.\ S.\thanks{This work was supported by the National Science Foundation under Grant CCF-2313372 and by the Simons  
 Foundation, Grant Number 825876, Awardee Thu D. Nguyen.}\vspace{0.1cm}\\\vspace{0.1cm} Rutgers University\\ \texttt{karthik.cs@rutgers.edu}\and Euiwoong Lee\footnote{Supported in part by NSF grant CCF-2236669 and Google.}\vspace{0.1cm}\\\vspace{0.1cm}  University of Michigan\\ \texttt{euiwoong@umich.edu}\and Pasin Manurangsi\thanks{Part of this work was done while the author was visiting Rutgers University.}\vspace{0.1cm}\\\vspace{0.1cm} Google Research\\ \texttt{pasin@google.com}}
\begin{document}
\maketitle

\begin{abstract}
    Parameterized Inapproximability Hypothesis (\PIH) is a central question in the field of parameterized complexity. \PIH asserts that given as input a 2-\CSP on $k$ variables and alphabet size $n$, it is \W$[1]$-hard parameterized by $k$ to distinguish if the input is perfectly satisfiable or if every assignment to the input violates 1\% of the constraints. \vspace{0.15cm}
    
    An important implication of \PIH is that it yields the tight parameterized inapproximability of the \mc problem. In the \mc problem, we are given as input a set system, a threshold $\tau>0$, and a parameter $k$ and the goal is to determine if there exist $k$ sets in the input whose union is at least $\tau$ fraction of the entire universe. \PIH is known to imply that it is  \W$[1]$-hard parameterized by $k$ to distinguish if there are $k$ input sets whose union is at least $\tau$ fraction of the universe or if the union of every $k$ input sets is not much larger than $\tau\cdot (1-\frac{1}{e})$ fraction of the universe. \vspace{0.15cm}

    In this work we present a gap preserving \FPT reduction (in the reverse direction) from the \mc problem to the aforementioned 2-\CSP problem, thus showing that the assertion that  approximating the \mc problem to some constant factor is \W$[1]$-hard implies \PIH.  In addition, we present a gap preserving \FPT reduction from the $k$-median problem (in general metrics) to the \mc problem, further highlighting the power of gap preserving \FPT reductions over classical gap preserving polynomial time reductions. 
    
\end{abstract}
\clearpage
\section{Introduction}
Approximation Algorithms and Fixed Parameter Tractability are two popular ways to cope with \NP-hardness of computational problems. In recent years, there is a 
steady rise in results contributing to the theory of parameterized inapproximability (see e.g.~\cite{FKLM20}). These include parameterized inapproximability results for the $k$-Set Intersection problem \cite{Lin18,BukhSN21}, $k$-Set Cover problem \cite{CL19,KLM19,L19,KN21,LinRSW23}, $k$-Clique problem \cite{lin2021constant,KK22,CFLL23}, Steiner Orientation problem \cite{Wlo19}, fundamental problems in coding theory and lattice theory  \cite{BBEGKLMM21,BennettCG023}, and more. 

A major open question in the theory of parameterized inapproximability is the resolution of the \emph{parameterized inapproximability hypothesis} (\PIH) \cite{LRSZ20}. \PIH asserts   that there exists some $\varepsilon>0$, such that it is \W$[1]$-hard to distinguish if a 2-\CSP instance on $k$ variables ($k$ is the parameter) and alphabet size $n$ is completely satisfiable or if every assignment to its variables satisfies at most $1-\varepsilon$ fraction of the constraints. The analogue of (the resolution of) \PIH in the \NP-world is the celebrated \PCP theorem~\cite{AS98,ALMSS98,D07}, and thus positively resolving \PIH is also appropriately dubbed as proving the ``\PCP theorem for Parameterized Complexity'' \cite{LRSZ20}.

\PIH is known to imply the hardness of approximation (to some positive constant factor) of many fundamental problems in parameterized complexity (for which we do not yet know how to prove unconditional constant factor \W$[1]$-hardness), such as \mc \cite{CGLL19} (and consequently clustering problems such as minimizing the $k$-median and $k$-means objectives; see \cite{CGLL19}), Directed Odd Cycle Transversal \cite{LRSZ20}, Strongly Connected Steiner Subgraph \cite{CFM22},  Planar Steiner Orientation \cite{ChitnisFS19}, Grid Tiling and Determinant Maximization \cite{ohsaka2022parameterized}, Independent Set in $H$-free graphs \cite{dvovrak2022parameterized}, etc.  

It is known that assuming the Gap Exponential Time Hypothesis (\textsf{Gap-ETH}) \cite{MR16,D16} one can show that the gap problem on 2-\CSP referred to in \PIH is not in \FPT (see e.g. \cite{BGKM18}), and moreover that many parameterized inapproximability results discussed in this paper follow from \textsf{Gap-ETH} (e.g. \cite{CCKLM20,Manurangsi20}). Moreover, in a recent breakthrough, Guruswami et al. \cite{GLRSW24}, building on ideas in \cite{LinRSW23FOCS}, proved that assuming the Exponential Time Hypothesis \cite{IP01,IPZ01}, the aforementioned gap problem on 2-\CSP is not in \FPT.  Subsequently, they even obtained near optimal conditional time lower bounds for the parameterized 2-\CSP problem \cite{GLRSWpreprint24}. However, this paper is solely focused on parameterized inapproximability, and thus we will not further elaborate on the related works in fine-grained inapproximability. 

A popular approach to make progress on central questions such as resolving \PIH, is to prove unconditional hardness results for problems whose hardness is known only assuming \PIH. Such an approach (of proving the implications unconditionally) has been historically very fruitful in complexity theory, for example, in the last decade it is in the attempt of proving improved unconditional \NP-hardness of approximation result for the vertex cover problem \cite{KMS17} (whose optimal inapproximability is only known under Unique Games Conjecture \cite{khot2002power}), that the 2-to-2 games theorem was proven \cite{KMS18}. Moreover, this approach is indeed an active line of research in the theory of parameterized inapproximabily, leading  to the hardness of approximation results   for $k$-set cover problem \cite{CL19,KLM19,L19}, $k$-clique problem \cite{lin2021constant,KK22,CFLL23}, and more.

The maximization version of the $k$-set cover problem, i.e., the \mc problem is a fundamental optimization problem at the heart of  many computation problems in computer science. For example, it is at the heart of many clustering problems \cite{guha1999greedy}. Formally, in the  \mc problem we are given as input a pair $(\mathcal{U},\mathcal{S})$ and a parameter $k$, where $\mathcal{S}$ is a collection of sets over the universe $\mathcal{U}$, and the goal is to find $k$ sets in $\mathcal{S}$ whose union is of maximum cardinality. The  \mc problem is a canonical \W$[2]$-complete problem, and currently the \W$[1]$-hardness of approximating the \mc problem to some constant factor only holds assuming \PIH \cite{CGLL19}. Thus, we ask:

\begin{center}
    \textit{Is it possible to prove constant inapproximability of the \mc problem\\ circumventing the resolution of \PIH?}
\end{center}

This question is particularly appealing since \W$[1]$-hardness of approximating $k$-set cover (minimization variant of \mc) was established circumventing \PIH \cite{CL19,KLM19,L19}, and more recently,  similar progress was achieved for the $k$-clique problem as well \cite{lin2021constant,KK22,CFLL23}. 

Our first result is rather surprising (at least in first glance),  that the answer to the above question is in the negative.

\begin{theorem}[Informal statement; See Theorem~\ref{thm:mctocspformal} for a formal statement]\label{thm:mctocsp}
For every $\delta \in (0,1/2]$ (where $\delta$ is allowed to depend on $k$), if approximating the \mc problem to $(1-\delta)$ factor is \W$[1]$-hard then approximating 2-\CSP to $\left(1-\frac{\delta^2}{4}\right)$ factor is also \W$[1]$-hard (under randomized Turing reductions).
\end{theorem}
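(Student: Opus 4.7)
The plan is to give a gap-preserving \FPT reduction from \mc to 2-\CSP. Given a \mc instance $(\U,\CS,k)$ promising either a $k$-cover of $\U$ or that every $k$ sets miss at least a $\delta$ fraction of $\U$, the goal is to construct a 2-\CSP $\Phi$ whose value is $1$ in the former case and at most $1-\delta^2/4$ in the latter.

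I would introduce $k$ primary \emph{set variables} $x_1,\ldots,x_k$ over $\CS$, intended to hold the $k$ chosen sets, alongside $t$ auxiliary \emph{witness variables} $y_1,\ldots,y_t$ over alphabet $[k]\cup\{\bot\}$, indexed by a random sample $u_1,\ldots,u_t\in\U$, where $y_i$ is intended to name a position $j$ with $u_i\in x_j$, or $\bot$ if $u_i$ is uncovered. Two families of pair constraints are added: \emph{coverage consistency} on each $(x_j,y_i)$, which enforces $y_i=j\Rightarrow u_i\in x_j$; and \emph{witness non-emptiness} on each $(y_i,y_{i'})$, which enforces $y_i\neq\bot\lor y_{i'}\neq\bot$.

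Completeness is immediate: a YES cover $(S_1^\star,\ldots,S_k^\star)$ yields the assignment $x_j=S_j^\star$ and $y_i=$ any $j$ with $u_i\in S_j^\star$, satisfying every constraint. For soundness in the NO case, given any $x$-assignment the optimal $y_i$ is $\bot$ exactly when $u_i$ is uncovered; letting $M$ denote the number of uncovered samples, the fraction of violated constraints equals $\binom{M}{2}/[\,tk+\binom{t}{2}\,]$. For $t$ sufficiently large relative to $k$ the denominator is dominated by $\binom{t}{2}$, and by Chernoff concentration $M/t$ approaches the true miss fraction $m\geq\delta$, so the fraction of violations approaches $m^2\geq\delta^2$, comfortably above the target $\delta^2/4$.

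The main obstacle I expect is uniformity: the above concentration holds for each \emph{fixed} $x$-assignment, but a 2-\CSP solver may adversarially choose among all $|\CS|^k$ possibilities, and a naive union bound would demand $t=\Omega(k\log|\CS|)$, violating the \FPT parameterization. This is precisely where the randomized Turing aspect of the reduction becomes essential: multiple independent sampling rounds produce independent 2-\CSP instances, and aggregating oracle answers (for instance, accepting only when every query reports value~$1$) closes the gap between the per-round guarantee and the adversary's freedom. An alternative route would be a preliminary gap-preserving MaxCov-to-MaxCov sparsification that first shrinks $|\CS|$ down to $\poly(k/\delta)$, after which a single-round union bound fits within the \FPT budget.
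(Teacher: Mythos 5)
Your reduction has two genuine gaps, each of which is fatal, and together they explain why the paper takes a structurally different route.

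\textbf{The soundness analysis misjudges the adversary.} You assert that, for a fixed $x$-assignment, the optimal choice is $y_i=\bot$ exactly on the uncovered samples, giving $\binom{M}{2}$ violated constraints. But setting $y_i=\bot$ costs one violated witness constraint against \emph{every other} $\bot$, whereas setting $y_i=j$ for an arbitrary fake $j$ costs exactly one violated consistency constraint $(x_j,y_i)$. Once $b\geq 3$ witnesses are set to $\bot$, converting one of them to a fake label strictly helps ($\binom{b}{2}-\binom{b-1}{2}=b-1\geq 2>1$). Hence the adversary sets at most two $y_i$ to $\bot$ and fakes the rest, giving roughly $M$ violations rather than $\binom{M}{2}$, and the violated fraction is $\Theta\bigl(M/(tk+t^2)\bigr)\to 0$ as $t$ grows with $M/t=\delta$ fixed. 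Reweighting the two constraint families trades this off, but optimizing the weights still caps the achievable gap at $\Theta\bigl(\delta^2/(1+k\delta)\bigr)$, which for $\delta$ bounded away from $0$ and $k$ large is $\Theta(\delta/k)$, not the claimed $\delta^2/4$. The underlying problem is that a satisfied/violated bit per edge cannot express ``how much'' coverage each $y_i$ certifies; this is precisely what the paper's Valued CSP intermediate (Lemma~\ref{lem:cov-to-vcsp}) provides.

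\textbf{The parameter budget is not actually met.} You correctly flag that $t$ cannot be both $O_k(1)$ (so the number of variables is a function of $k$) and $\Omega(k\log|\CS|)$ (so the union bound over $|\CS|^k$ $x$-assignments holds in the NO case). Your proposed patch does not close this: with $t=O_k(1)$, the event ``some $x$-assignment covers all $t$ sampled elements'' has probability $\Omega(1)$, and repeating the sampling and conjoining the oracle answers fails with the same $\Omega(1)$ probability in \emph{every} round; a randomized Turing reduction must map a NO instance to a family in which \emph{all} outputs are far from satisfiable with probability $1-o(1)$. Your alternative — shrinking $|\CS|$ to $\poly(k/\delta)$ — removes the hardness entirely, since with that few sets the instance is solvable exactly by brute force in \FPT time.

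The paper avoids both problems at once. It first shrinks the \emph{universe}, not $\CS$, to $O_k(\log|\CS|)$ via random hashing and subsampling (Lemma~\ref{lem:uni-red}); since the quantity being union-bounded is over $|\CS|^k$ tuples of sets, a universe of size $\Theta(k\log|\CS|)$ is exactly affordable. It then partitions this small universe into $M=O_k(1)$ chunks of size $O(\log n/\log k)$ and creates \emph{one} $y$-variable per chunk whose alphabet of size $(k+1)^{O(\log n/\log k)}=\poly(n)$ encodes the full covering pattern of the chunk (Lemma~\ref{lem:cov-to-vcsp}). This keeps the variable count at $k+O_k(1)$ independently of $n$, and because each edge's \emph{value} is the fraction of the chunk consistently covered, the gap does not dilute as the universe grows. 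A final enumeration step (Lemma~\ref{lem:vcsp-to-csp}) rounds the Valued CSP to a standard 2-\CSP. The ``pack many elements into one variable via a large polynomial alphabet, and score edges by value'' move is the essential idea your proposal is missing.
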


Recall that in \cite{CGLL19}, the authors proved that for every $\varepsilon>0$, assuming \PIH, approximating the \mc problem to a factor better than $1-\frac{1}{e}+\varepsilon$ is \W$[1]$-hard  (see Lemma~19 in \cite{CGLL19}). Moreover, this inapproximability factor is tight \cite{hochba1997approximation}.  Together with \Cref{thm:mctocsp}, we have that there is a gap preserving reduction in both directions\footnote{Albeit the reduction from the \mc problem to the 2-\CSP problem is a randomized Turing reduction.} between 2-\CSP (on $k$ variables and alphabet size $n$) and the \mc problem on $\poly(n)$ sets over universe of size $\poly(n)$.

To the best of our knowledge, we are not aware of a direct gap preserving reduction to the 2-\CSP problem from the \textsf{maxcoverage} problem in the \NP world, i.e., in other words we do not know how to directly prove the \PCP theorem for \NP, assuming that approximating \textsf{maxcoverage} to some constant factor is \NP-hard.  Furthermore, an interesting consequence of Theorem~\ref{thm:mctocsp} is a gap amplification result for the \mc problem in the parameterized complexity world: starting from the \W$[1]$-hardness of approximating the \mc problem to $(1-\varepsilon)$ factor (for some constant $\varepsilon>0$), we can first apply Theorem~\ref{thm:mctocsp} and then Lemma~19 of \cite{CGLL19}, to obtain that approximating the \mc problem to $(1-\frac{1}{e}+\varepsilon')$ factor, for any $\varepsilon'>0$ is \W$[1]$-hard.

At a high level, the proof of Theorem~\ref{thm:mctocsp} has three steps. Starting from an instance of the \mc problem containing $n$ sets, in the first step we subsample an universe of size $O_k(\log n)$ while retaining the gap in the completeness and soundness cases (Lemma~\ref{lem:uni-red}). In the second step, we reduce from this new \mc instance on the smaller universe to a variant of the 2-\CSP instance called ``Valued \CSP'' (see Section~\ref{sec:prelim} for the definition) by first equipartitioning the universe into $O_k(1)$ many subuniverses and then constructing a Valued \CSP instance where for a subset of variables, each variable in that subset  is associated with a subuniverse and an assignment to that variable determines how each of the $k$ solution sets cover this subuniverse. The rest of the variables ($k$ many of them) encode the $k$ solution sets (see Lemma~\ref{lem:cov-to-vcsp}). Finally, in the last step, we provide a gap preserving reduction from Valued \CSP to the standard 2-\CSP (Lemma~\ref{lem:vcsp-to-csp}).

Theorem~\ref{thm:mctocsp} has further implications on our understanding of the complexity of approximating the \mc problem. While exactly solving the problem is \W$[2]$-hard, it was known to experts that approximating  the \mc problem to $1-\frac{1}{F(k)}$ factor, for any computable function $F$, is in \W$[1]$. A corollary of Theorem~\ref{thm:mctocsp} is a formal proof of this \W$[1]$-membership (by setting $\delta=1/F(k)$ in Theorem~\ref{thm:mctocsp}). Moreover,  by modifying the range of parameters in the reduction of \cite{KLM19} for the $k$-set cover instance, it is possible to argue that approximating  the \mc problem to any $1-1/\rho(k) $ factor is \W[1]-hard for every unbounded computable function $\rho$ (for example think of $\rho(k)=\log^*(k)$). Thus, we obtain the following. 

\begin{theorem}\label{thm:mccomplete}
    Let $\rho:\N\to\N$ be any unbounded computable function. Then approximating the \mc problem to $1-\frac{1}{\rho(k)}$ factor is \W$[1]$-complete.
\end{theorem}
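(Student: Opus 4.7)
The statement has two directions, \W$[1]$-membership and \W$[1]$-hardness, which I would treat separately.

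\emph{Membership.} The plan is to derive \W$[1]$-membership as a direct corollary of Theorem~\ref{thm:mctocsp} by instantiating $\delta = 1/\rho(k)$; this yields a gap-preserving \FPT reduction from the $(1-1/\rho(k))$-gap \mc problem to the $(1 - 1/(4\rho(k)^{2}))$-gap 2-\CSP problem. It then suffices to place the resulting gap 2-\CSP in \W$[1]$, for which I would use the folklore encoding into $k$-\clique on the constraint graph: introduce a vertex for every (variable, value) pair, and connect two vertices exactly when they jointly satisfy the corresponding binary constraint. A completely satisfying assignment is precisely a $k$-clique meeting every variable class; since the ``yes'' instances of the gap 2-\CSP problem coincide with the perfectly satisfiable 2-CSPs, the standard \W$[1]$-membership of exact 2-\CSP transfers verbatim to the gap version.

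\emph{Hardness.} I would revisit the reduction of \cite{KLM19}, which establishes constant-factor \W$[1]$-hardness of the minimization variant $k$-Set Cover. That reduction is built from a composition whose soundness gap is controlled by a handful of internal parameters (such as the size of the underlying gadget, the length of a combinatorial code, and the number of composed layers). The plan is to let these internal parameters grow slowly as a function of $k$, dictated by $\rho$, so that the resulting gap shrinks to $1 - 1/\rho(k)$ while the whole reduction remains \FPT. Converting from $k$-Set Cover to \mc then preserves the gap up to a standard rescaling.

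The principal obstacle is the calibration inside the hardness step: one must simultaneously track how the internal parameters of the \cite{KLM19} pipeline influence the instance size blow-up, the reduction's running time, and the final gap, and then verify that these quantities remain mutually consistent when those parameters are forced to follow an arbitrarily slow unbounded $\rho$. A secondary technical point is that Theorem~\ref{thm:mctocsp} is stated as a randomized Turing reduction (due to the subsampling step in Lemma~\ref{lem:uni-red}); however, because the subsample size is controlled purely by $k$ and $\delta = 1/\rho(k)$, the randomness can be replaced by a deterministic hitting-set-type construction, so the \W$[1]$-membership argument is not affected by the randomization.
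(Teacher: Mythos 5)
Your two directions fare very differently against the paper.

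\textbf{Membership.} This matches the paper's argument: set $\delta = 1/\rho(k)$ in Theorem~\ref{thm:mctocsp}, then note that the completeness side of the resulting $\gapcsp(1,1-\eps)$ instances is perfect satisfiability, so the usual clique-on-(variable,value)-pairs encoding shows membership in \W$[1]$. Your remark about the randomness in Lemma~\ref{lem:uni-red} is a legitimate subtlety (the paper simply calls this ``a corollary of Theorem~\ref{thm:mctocsp}'' without addressing derandomization), and the hitting-set sketch is at a comparable level of rigor to the paper.

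\textbf{Hardness.} Here your plan diverges from the paper in two ways and the divergence is fatal. First, the claimed ``converting from $k$-Set Cover to \mc\ then preserves the gap up to a standard rescaling'' step does not exist. The \cite{KLM19} hardness for $k$-Set Cover says that in NO instances every $\rho(k)\cdot k$ sets leave something uncovered; it says nothing about \emph{how much} of the universe $k$ sets (or $\rho(k)k$ sets) can cover, which is exactly what a \mc\ gap needs. A NO Set Cover instance can have $k$ sets covering all but one element, making the corresponding \mc\ instance trivially a YES. The paper is explicit that constant-factor \W$[1]$-hardness of \mc\ remains open (Figure~\ref{fig:enter-label}), so any ``standard rescaling'' would resolve an open problem. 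Second, even ignoring that, tuning the internal parameters of the \cite{KLM19} pipeline does not address the real obstacle, which is an arbitrarily \emph{slowly} growing $\rho$. The paper's actual route is: start from the $k$-\maxcover\ bipartite-labeling hardness of \cite{KLM19} (Theorem~\ref{thm:ETHmax}, which is \emph{not} $k$-Set Cover), apply a Feige-style product construction to build a \mc\ instance with universe $\{(t,i,f)\}$ and sets $S_{(t,j,v)}$, and crucially choose the blow-up parameter $T = T(k)$ so that $\rho(Tk) \geq 2k^{A(k)}$. The fractional amount left uncovered in the soundness case is a fixed quantity $1/(2k^{A(k)})$ regardless of $T$; the role of $T$ is solely to inflate the new solution-size parameter from $k$ to $Tk$ so that $1/(2k^{A(k)}) \geq 1/\rho(Tk)$, which is possible exactly because $\rho$ is unbounded. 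Your plan has no analogue of this parameter inflation, and without it the argument cannot accommodate $\rho$'s that grow as slowly as $\log^*$.
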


We can extend our line of inquiry and wonder if one can prove the parameterized inapproximability of clustering objectives such as $k$-median and $k$-means without proving anything about the inapproximability of the \mc problem. We will restrict our attention here to the $k$-median problem, but our results extend to the $k$-means problem as well (see Remark~\ref{rem:kmean}). 

In the $k$-median problem (in general metric), we are given as input a tuple $((V,d),C,\mathcal{F},\tau)$ and a parameter $k$, where $V$ is a finite set, and $d$ is a distance function for all pairs of points in $V$ respecting the triangle inequality (or more precisely $(V,d)$ is a metric space), $C,\mathcal{F}\subseteq V$ and the goal is to determine if there exists $F\subseteq \mathcal{F}$ such that $|F|=k$ and $\cost(C,F)$ is minimized, where $\cost(C,F)$ is the sum of distances from every client in $C$ to its closest facility in $F$  (see Section~\ref{sec:prelim} for a formal definition). The  $k$-median problem is \W$[2]$-complete by a simple reduction from the \mc problem, and currently the \W$[1]$-hardness of approximating $k$-median to some constant factor only holds assuming \PIH \cite{CGLL19}. Thus, we ask: 

\begin{center}
    \textit{Is it possible to prove constant inapproximability of $k$-median\\ circumventing the resolution of \PIH? }
\end{center}

Our result again is rather surprising that the answer to the above question is also in the negative.

\begin{theorem}\label{thm:kmedtomc}For any constants $\alpha,\delta >0$, if approximating the $k$-median problem to $(1+\alpha+\delta)$ factor is \W$[1]$-hard then approximating the multicolored \mc problem to $\left(1-\frac{\alpha}{2}\right)$ factor is also \W$[1]$-hard (under randomized Turing reductions). A similar reduction also holds from the $k$-means problem to the multicolored \mc problem.
\end{theorem}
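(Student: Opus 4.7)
The plan is to construct a randomized FPT reduction from gap-$(1+\alpha+\delta)$ \kmed to gap-$(1-\alpha/2)$ multicolored \mc by discretizing the metric into threshold levels, so that \mc coverage is (up to additive error) an affine function of a truncated \kmed cost. First, I would preprocess: randomly sub-sample a set $C' \subseteq C$ of size $m = f(k,\alpha,\delta)\cdot\mathrm{polylog}(n)$, using concentration so that $c_F(C')\approx (m/|C|)\,c_F$ for every $F\in\binom{\mathcal{F}}{k}$ simultaneously (absorbing part of the $\delta$ slack), and delete client--facility pairs with $d(c,f)>\tau$ (the YES optimum uses no such edges, and NO cost can only increase). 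Writing $\tau'=(m/|C|)\tau$, one then has a subinstance with the same $(1+\alpha+\delta/2)$-vs.-$\tau'$ gap.

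Next, I would set $T_{\max}=(3+2\delta/\alpha)\,\tau'/m$ and pick $L$ uniform thresholds $t_\ell=\ell\Delta$ with $\Delta=T_{\max}/L$. The multicolored \mc instance has universe $U=C'\times[L]$, and for each of the $k$ color classes $i$ and each facility $f\in\mathcal{F}$ the set $S_{f,i}=\{(c,\ell)\in U: d(c,f)\le t_\ell\}$. All color classes are structurally identical; we use the multicolored target format because one set must be chosen per class. For a selection $F=(f_1,\ldots,f_k)$, a short calculation gives
\[
\Bigl|\,\bigcup_{i=1}^{k} S_{f_i,i}\,\Bigr| \;=\; \sum_{\ell=1}^{L}\bigl|\{c\in C': d_F(c)\le t_\ell\}\bigr| \;=\; L|C'| \;-\; \frac{1}{\Delta}\sum_{c\in C'}\min(T_{\max},d_F(c)) \;\pm\; O(|C'|),
\]
so the coverage equals $L|C'|-c_F^{\mathrm{cap}}/\Delta$ (up to additive error), where $c_F^{\mathrm{cap}}=\sum_{c\in C'}\min(T_{\max},d_F(c))$.

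In the YES case, the optimal $F^*$ has $c_{F^*}^{\mathrm{cap}}\le c_{F^*}\le\tau'$, giving coverage at least $L|C'|-\tau'/\Delta$. If one can guarantee $c_F^{\mathrm{cap}}\ge (1+\alpha)\tau'$ for every $F$ in the NO case, the coverage is at most $L|C'|-(1+\alpha)\tau'/\Delta$, and the ratio is exactly
\[
\frac{L|C'|-(1+\alpha)\tau'/\Delta}{L|C'|-\tau'/\Delta} \;=\; 1 \;-\; \frac{\alpha\,\tau'/\Delta}{L|C'|-\tau'/\Delta} \;\le\; 1-\frac{\alpha}{2},
\]
using $L\Delta\,|C'|=T_{\max}|C'|=(3+2\delta/\alpha)\tau'$.

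The hard part will be Step~4's NO-case guarantee: the cap $T_{\max}=\Theta(\tau'/m)$ is much smaller than $\tau$, and an adversarial NO solution whose cost concentrates on a handful of far clients can satisfy $c_F\ge (1+\alpha+\delta)\tau'$ while still having $c_F^{\mathrm{cap}}$ as small as $O(T_{\max})\ll(1+\alpha)\tau'$, killing the gap. I would overcome this precisely via the randomized Turing nature of the reduction: repeatedly resample $C'$ (and possibly refine the threshold grid) across polynomially many trials, so that with high probability each ``bad'' $F$ is exposed in at least one trial by the max-cov oracle. The $\delta$ slack in the input gap is spent entirely on absorbing the combined losses from sub-sampling concentration, the edge-deletion preprocessing, and the averaging needed to neutralize concentrated-cost adversaries; the residual $\alpha$-gap translates through the affine coverage formula into the required $(1-\alpha/2)$-hardness for multicolored \mc. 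The same argument extends to $k$-means by using squared distances in place of distances throughout the discretization.
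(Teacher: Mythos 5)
Your approach is genuinely different from the paper's, but it has a fundamental gap that your proposed fix cannot close. The paper's proof proceeds by re-interpreting the $(1+2/e+\eps)$-approximation algorithm of Cohen-Addad et al.\ \cite{cohen2019tight}: after guessing (by exhaustive enumeration) the cluster leaders $\ell_1,\ldots,\ell_k$ and radii $R_1,\ldots,R_k$ of the optimal solution, one introduces \emph{fictitious facilities} $f'_i$ at distance $R_i$ from $\ell_i$, obtaining a ``backstop'' solution $F'$ with $\cost(C,F')\le (3+\eps)\opt$. The function $\improv(S)=\cost(C,F')-\cost(C,S\cup F')$ is then realized as a weighted coverage function over a universe whose per-client weights are bounded by $d(c,F')$, and the partition matroid constraint $|S\cap F_i|=1$ is exactly the multicolor constraint. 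The crucial effect of $F'$ is that every client's contribution is automatically truncated at $d(c,F')\le O(\opt/|C|\cdot\text{something bounded})$, and that truncation is \emph{lossless}: $\cost(C,S\cup F')=\cost(C,S)$ whenever $S$ hits every $F_i$, so the affine relation between coverage and cost is exact, not approximate.

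Your scheme replaces this backstop with an ad hoc cap $T_{\max}=\Theta(\tau'/m)$ and a fixed threshold grid, and you already flag the resulting problem: a NO solution $F$ whose cost is concentrated on a few far clients has $c_F^{\mathrm{cap}}$ far below $(1+\alpha)\tau'$, so its coverage is \emph{too high} and the NO instance can be mistaken for YES. The fix you propose — resampling $C'$ across many trials so each ``bad'' $F$ is ``exposed'' — cannot work, because the error goes in the wrong direction for a Turing reduction. In the NO case you need the coverage to be small for \emph{every} choice of sets in \emph{every} trial (otherwise the $\mc$ oracle returns a large-coverage witness and your reduction incorrectly accepts). But the shrinkage of $c_F^{\mathrm{cap}}$ below $c_F$ is a deterministic consequence of capping, not a sampling fluctuation: once $T_{\max}\ll\tau$, any $F$ that assigns a few clients distances far above $T_{\max}$ will look cheap under the cap on every sample of $C'$ (unless those clients happen not to be sampled, in which case $c_F^{\mathrm{cap}}$ is even smaller). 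Resampling does not recover the lost cost mass; it averages the same bias. The same unboundedness also breaks your uniform concentration step: Chernoff-type bounds over $n^k$ solutions require the summands $d_F(c)$ to be bounded relative to $\tau'/m$, which is exactly what the cap was supposed to give you, creating a circularity. The paper sidesteps all of this by guessing enough structure of the optimum (leaders and radii) to obtain, for free, a bounded-cost anchor $F'$; your blind discretization has no analogue of that anchor, and without one the reduction fails. If you want to repair your route, the natural move is precisely to guess leaders/radii first and cap each client at $d(c,F')$, at which point you are reconstructing the paper's $\improv$-as-coverage argument.
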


In Remark~\ref{rem:color} we discuss how to modify the proof of Theorem~\ref{thm:mctocsp} to obtain the statement of Theorem~\ref{thm:mctocsp} to hold even for the multicolored \mc problem instead of the (unmulticolored) \mc problem. Then we can put together the reduction in Theorem~\ref{thm:kmedtomc} with the modified Theorem~\ref{thm:mctocsp} to obtain an \FPT gap preserving reduction from the $k$-median problem to 2-\CSP problem.

The proof of Theorem~\ref{thm:kmedtomc} follows from observing that the $\left(1+\frac{2}{e}+\varepsilon\right)$-approximation algorithm (for any $\varepsilon>0$) of \cite{CGLL19} can be fine-tuned and rephrased as a reduction from $k$-median problem to the multicolored \mc problem. 

Both Theorems~\ref{thm:mctocsp} and \ref{thm:kmedtomc} illustrate the power of \FPT gap preserving reductions over classical gap preserving polynomial time reductions. In particular, if we had a polynomial time analogue of Theorem~\ref{thm:kmedtomc}, i.e., if the runtime of the algorithm $\mathcal{A}$ and $\Gamma$ in the Theorem~\ref{thm:kmedtomc} statement are both polynomial functions then  this would lead to  a major breakthrough in the field of approximation algorithms. In particular, it would show optimal approximation thresholds for the celebrated $k$-median and $k$-means problems in the \NP world, improving on the state-of-the-art result of \cite{cohen2023breaching} and \cite{kanungo2004local} respectively, showing that the hardness of approximation factors obtained in \cite{guha1999greedy} are optimal!

In Figure~\ref{fig:enter-label}, we highlight gap-preserving \FPT reductions between $k$-clique, 2-\CSP, \mc, and $k$-median and $k$-means problems. 
A glaring open problem in Figure~\ref{fig:enter-label} is whether constant inapproximability of the $k$-clique problem implies \PIH. This is a challenging open problem, even listed in \cite{FKLM20}.

\begin{figure}
    \resizebox{\textwidth}{!}{\begin{tikzpicture}
\small
\node (kclique) [draw=red!80!black,thick] at (1.5, 0) {$k$-\textsc{Clique}};

\node (maxcoverage) [draw=red!80!black,thick] at (6.75,3) {$k$-\textsc{Max-Coverage}};

\node (2-CSP) [draw=red!80!black,thick] at (1.5, 3) {2-\textsc{CSP}};

\node (kmedian) [draw=red!80!black,thick,text width=2cm] at (12.5, 3) {\centering\ $k$-\textsc{Median} \& $k$-\textsc{Means}};

 \begin{scope}[transform canvas={xshift=.4em}]
 \draw [-{Latex[length=1.5mm, width=1.5mm]}] (kclique) -- (2-CSP); 
\end{scope}

 \begin{scope}[transform canvas={xshift=-.4em}]
 \draw [-{Latex[length=1.5mm, width=1.5mm]}] (2-CSP) -- (kclique); 

\end{scope}

 \begin{scope}[transform canvas={yshift=-0.4em}]
 \draw [-{Latex[length=1.5mm, width=1.5mm]}] (2-CSP) -- (maxcoverage);

\draw [-{Latex[length=1.5mm, width=1.5mm]}] (maxcoverage) -- (kmedian);
\end{scope}

\begin{scope}[transform canvas={yshift=0.4em}]
 \draw [-{Latex[length=1.5mm, width=1.5mm]}] (maxcoverage) -- (2-CSP) ;

\draw [dashed, -{Latex[length=1.5mm, width=1.5mm]}] (kmedian) -- (maxcoverage);
\end{scope}

\node [above, align=center] at (3.5, 3.1) 
{\footnotesize Theorem~\ref{thm:mctocsp}};
\node [above, align=center] at (3.5, 2.35) 
{\footnotesize \cite{feige1998threshold}};

\node [above, align=center] at (2.1, 1.3) 
{\footnotesize {\color{blue}Open}}; 
\node [above, align=center] at (0.6, 1.35) {\footnotesize \cite{FeigeGLSS96}}; 

\node [above, align=center] at (9.8, 3.1) 
{\footnotesize Theorem~\ref{thm:kmedtomc}};
 \node [above, align=center] at (9.8, 2.35) {\footnotesize \cite{guha1999greedy}};

 \node [above, align=center] at (9.25, 0) {\huge Gap Preserving Reductions};

\end{tikzpicture}}
\caption{In the above figure, we provide bidirectional \FPT gap preserving reductions between $k$-clique, 2-\CSP, \mc, and $k$-median and $k$-means problems, whenever possible, with appropriate references. The reduction from $k$-median and $k$-means problems is to the multicolored version of the \mc problem (see Remark~\ref{rem:mc} for a discussion) and that is why the arrow is dashed. }
    \label{fig:enter-label} 
\end{figure}

\section{Preliminaries}\label{sec:prelim}

In this section, we formally define the problems of interest to this paper. Throughout, we use the notation $O_k(\cdot)$ and $\Omega_k(\cdot)$ to denote that the hidden constant can be any computable function of $k$. 

\paragraph{\mc problem.} We denote by $(\U,\CS)$ a set system where $\U$ denotes the universe and $\CS$ is a collection of subsets of $\U$. In the \mc problem, we are given as input a set system $(\U,\CS)$ and a parameter $k$, and the goal is to identify $k$ sets in $\CS$ whose union is of maximum size. We denote by $\opt(\U, \CS)$ the optimum fraction of the \mc instance $(\U,\CS)$, i.e. $\underset{S_{i_1}, \ldots, S_{i_k} \in \CS}{\max} \frac{|S_{i_1} \cup \cdots \cup S_{i_k}|}{|\U|}$.

\begin{sloppypar}
    We denote by  $\gapmaxcpv(\tau, \tau')$ the decision problem where given as input an instance $(\U,\CS)$ of the \mc problem, the goal is to distinguish the completeness case where $\opt(\U, \CS)\ge\tau$ and the soundness case where $\opt(\U, \CS)<\tau'$. We also define $\gapmaxcpvbd(\tau, \tau')$ be the same problem but only restricted to the instances where $|\U| \leq O_k(\log |\CS|)$.
\end{sloppypar}

In this paper, we also refer to the multicolored \mc problem whose input is $(\mathcal{U},\mathcal{S}:=\mathcal{S}_1\dot\cup \mathcal{S}_2\dot\cup \cdots \dot\cup \mathcal{S}_k)$, and the goal is to identify $(S_1,S_2,\ldots ,S_k)\in \mathcal{S}_1\times \mathcal{S}_2\times \cdots \times \mathcal{S}_k$ such that $|S_{1} \cup S_2\cup \cdots \cup S_{k}|$ is maximized. Moreover, we extend the above notations of $\opt$, $\gapmaxcpv$, and $\gapmaxcpvbd$ to the multicolored \mc problem.

\paragraph{2-\CSP.} For convenience, we use weighted version of 2-\CSP where the edges are weighted. Note that there is a simple (\FPT) reduction from this version to the unweighted version~\cite{CrescenziST01}.

A 2-\CSP instance $\Pi = (V, E, (\Sigma_v)_{v \in V}, (w_e)_{e \in E}, (C_e)_{e \in E})$ consists of the following:
\begin{itemize}
\item The set of vertices (i.e. variables) $V$.
\item The set $E$ of edges between $V$.
\item For each $v \in V$, the alphabet set $\Sigma_v$ of $v$.
\item For each $e = (u, v) \in E$, a weight $w_e$ and the constraint $C_e \subseteq \Sigma_u \times \Sigma_v$.
\end{itemize}

An \emph{assignment} is a tuple $\psi = (\psi_v)_{v \in V}$ where $\psi_v \in \Sigma_v$. The \emph{(weighted and normalized) value} of an assignment $\psi$, denoted by $\val_{\Pi}(\psi)$, is defined as: $$\frac{1}{\underset{e \in E}{\sum} w_e} \cdot \underset{e = (u, v) \in E}{\sum} w_e \cdot \ind[(\psi_u, \psi_v) \in C_{u, v}],$$
where for any proposition $\Lambda$, $\ind(\Lambda)$ is 1 if $\Lambda$ is true and 0 otherwise. The value of the instance is defined as $\val(\Pi) := \underset{\psi}{\max}\ \val_{\Pi}(\psi)$ where the maximum is over all assignments $\psi$ of $\Pi$.
The \emph{alphabet size} of the instance is defined as $\underset{v \in V}{\max}\ |\Sigma_v|$. If a 2-\CSP instance is provided without $(w_e)_{e\in E}$ then the weights are all assumed to be 1. 

The $\gapcsp(c, s)$ problem is to decide whether a 2-\CSP instance $\Pi$ has value at least $c$ or less than $s$. Note that the parameter of this problem is the number of variables in $\Pi$.

\paragraph{(Finite) Valued 2-\CSP.} It will also be more convenient for us to employ a more general version known as \emph{(Finite) Valued \CSP}\footnote{See e.g. \cite{KolmogorovKR15} and references therein.}. In short, this is a version where different assignments for each edge can results in different values. This is defined more precisely below. 

A Valued 2-\CSP instance $\Pi = (V, E, (\Sigma_v)_{v \in V}, (f_e)_{e \in E})$ consists of
\begin{itemize}
\item $V, E, (\Sigma_v)_{v \in V}$ are defined similarly to 2-\CSP instances.
\item For each $e = (u, v) \in E$, the value function $f_e: \Sigma_u \times \Sigma_v \to [0, 1]$.
\end{itemize}
The notion of an assignment is defined similar to 2-\CSP instance but the value of an assignment is now defined as:
\begin{align*}
\val_{\Pi}(\psi) := \underset{(u, v) \sim E}{\E}[f_{(u, v)}(\psi_u, \psi_v)]. 
\end{align*}
The value of an instance is defined similar to before.

The $\gapvcsp(c, s)$ problem is to decide whether a Valued 2CSP instance $\Pi$ has value at least $c$ or less than $s$. Again, the parameter here is the number of variables.

\paragraph{$k$-median.} An instance of the \kmed problem is defined
by a tuple $((V,d), C, \calF, k)$, where $(V,d)$ is a metric space over a
set of points $V$ with $d(i,j)$ denoting the distance between two points
$i,j$ in $V$. Further, $C$ and $\calF$ are subsets of $V$ and are referred to
as ``clients'' and ``facility locations'' respectively, and $k$ is a positive
parameter.  The goal is to find a subset $F$ of $k$ facilities in $\calF$
to minimize
$$\cost(C, F) := \sum_{j \in C} d(j,F), $$ 
where $d(j,F):=\underset{f\in F}{\min}\ d(j,f)$.
The cost of the $k$-means objective is $\cost_2(C, F) := \sum_{j \in C} d(j,F)^2$.

\paragraph{$k$-\textsf{MaxCover} problem.} We recall the \maxcover problem introduced in \cite{CCKLM20}.
A $k$-\maxcover instance $\Gamma$ consists of a bipartite graph $G=(V\dot\cup W, E)$ such that $V$ is partitioned into $V=V_1\dot\cup \cdots \dot\cup V_k$ and $W$ is partitioned into $W=W_1\dot\cup \cdots \dot\cup W_\ell$. We sometimes refer to $V_{i}$'s and $W_j$'s as {\em left super-nodes} and {\em right super-nodes} of $\Gamma$, respectively. 

A solution to $k$-\maxcover is called a {\em labeling},
which is a subset of vertices $v_1\in V_1,\ldots ,v_k\in V_k$.
We say that a labeling $v_1,\ldots ,v_k$ {\em covers} a right super-node $W_{i}$, if there exists a vertex $w_{i} \in W_{i}$ which is a joint neighbor of all $v_1,\ldots ,v_k$, i.e., $(v_j,w_i)\in E$ for every $j\in[k]$.
We denote by $\maxcover(\Gamma)$ the maximal fraction of right super-nodes that can be simultaneously covered, i.e.,
\begin{align*}
\maxcover(\Gamma) = \frac{1}{\ell} \left(\max_{\text{labeling } v_1,\ldots ,v_k} \bigl\lvert\bigl\{i \in [\ell] \mid W_i \text{ is covered by } v_1,\ldots ,v_k\bigr\}\bigr\rvert\right). 
\end{align*}

Given an instance $\Gamma(G,c,s)$ of the  $k$-\maxcover problem as input,  our goal is to distinguish between the two cases:
\begin{description}
	\item[Completeness] $\maxcover(\Gamma)\ge c$.   
	\item[Soundness] $\maxcover(\Gamma)\le s$.  
\end{description}

\paragraph{Concentration Inequalities.} We will also use the multiplicative Chernoff inequality, which is summarized below.
\begin{theorem}[Chernoff Inequality] \label{thm:chernoff}
Let $X_1, \ldots, X_m$ denote i.i.d. Bernoulli random variables where $\E[X_j] = q$. Then, for any $\zeta \in (0, 1)$ we have
\begin{align*}
\Pr[X_1 + \cdots + X_m \geq (1 + \zeta) q m], \Pr[X_1 + \cdots + X_m \leq (1 - \zeta) q m] \leq \exp\left(-\zeta^2 
q m / 3\right).
\end{align*}
\end{theorem}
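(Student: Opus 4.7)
The plan is to prove both tails via the standard Chernoff--Cram\'er exponential moment method: apply Markov's inequality to a well-chosen exponential of $S := X_1 + \cdots + X_m$, exploit independence to factor the resulting moment generating function, upper bound each factor by a clean exponential, and optimize over the free parameter.

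For the upper tail, fix $t > 0$ and estimate
\begin{align*}
\Pr[S \ge (1+\zeta) q m]
= \Pr\bigl[e^{tS} \ge e^{t(1+\zeta)qm}\bigr]
\le e^{-t(1+\zeta)qm}\prod_{i=1}^{m}\E[e^{tX_i}],
\end{align*}
using Markov's inequality in the second step. Since each $X_i$ is Bernoulli with mean $q$, one has $\E[e^{tX_i}] = 1 + q(e^{t}-1) \le \exp(q(e^{t}-1))$ by the elementary bound $1+x \le e^{x}$. Substituting and choosing the Chernoff-optimal $t = \ln(1+\zeta)$ (valid since $\zeta > 0$), this collapses to
\begin{align*}
\Pr[S \ge (1+\zeta)qm] \le \exp\bigl(-qm \cdot h(\zeta)\bigr), \qquad h(\zeta) := (1+\zeta)\ln(1+\zeta) - \zeta.
\end{align*}
The lower tail is handled symmetrically by running the same argument with $t < 0$ and optimizing, producing $\Pr[S \le (1-\zeta)qm] \le \exp(-qm \cdot \tilde h(\zeta))$ where $\tilde h(\zeta) := (1-\zeta)\ln(1-\zeta) + \zeta$; here $\zeta < 1$ is used to keep the logarithm defined.

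The only nonmechanical step is then the real-variable inequality $\min\{h(\zeta),\tilde h(\zeta)\} \ge \zeta^{2}/3$ for $\zeta \in (0,1)$, which converts the sharp (but awkward) Kullback--Leibler rate into the quoted quadratic rate. For the upper tail, expand $(1+\zeta)\ln(1+\zeta)$ as $\sum_{n \ge 2}\frac{(-1)^{n}\zeta^{n}}{n(n-1)}$ and group consecutive pairs to verify $h(\zeta) \ge \zeta^{2}/2 - \zeta^{3}/6 \ge \zeta^{2}/3$ on $(0,1)$. For the lower tail, let $f(\zeta) := \tilde h(\zeta) - \zeta^{2}/3$; then $f(0)=0$, and a direct calculation shows $f'(\zeta) = -\ln(1-\zeta) - \tfrac{2}{3}\zeta \ge 0$ on $(0,1)$ (since $-\ln(1-\zeta) \ge \zeta \ge \tfrac{2}{3}\zeta$), so $f \ge 0$ on the interval. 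This closes both bounds. I anticipate no real obstacle: the only potential pitfall is book-keeping sign conventions in the lower-tail optimization, which is why it is cleanest to run the upper-tail argument first and then dualize.
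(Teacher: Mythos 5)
The paper states this Chernoff bound without proof, citing it as a standard tool, so there is no internal argument to compare yours against. Your derivation via the exponential moment method --- Markov's inequality applied to $e^{tS}$, factoring the moment generating function by independence, bounding each factor by $\exp(q(e^t-1))$ via $1+x\le e^x$, and optimizing at $t=\ln(1+\zeta)$ (resp.\ $t=\ln(1-\zeta)$) --- is the canonical proof, and the two real-variable inequalities $h(\zeta)\ge\zeta^2/3$ and $\tilde h(\zeta)\ge\zeta^2/3$ on $(0,1)$ are both verified correctly. One small labeling slip: the series $\sum_{n\ge 2}\frac{(-1)^n\zeta^n}{n(n-1)}$ is the expansion of $h(\zeta)=(1+\zeta)\ln(1+\zeta)-\zeta$, not of $(1+\zeta)\ln(1+\zeta)$ itself (whose expansion has an additional leading $\zeta$); the pairing argument and the final reduction $\zeta^2/2-\zeta^3/6\ge\zeta^2/3\iff\zeta\le 1$ are then fine. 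As a minor remark, the lower-tail rate in fact satisfies the stronger $\tilde h(\zeta)\ge\zeta^2/2$ (your derivative computation already shows $-\ln(1-\zeta)-\zeta\ge 0$), but $\zeta^2/3$ is all the stated theorem requires, and keeping a single uniform rate for both tails matches the statement.
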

\section{Reducing  $k$-MaxCoverage to 2-\CSP}

In this section, we prove the following formal version of Theorem~\ref{thm:mctocsp}. 

\begin{theorem}\label{thm:mctocspformal}
For every $\tau>0$ and   $\delta \in (0,1/2]$ (where both $\tau$ and $\delta$ are allowed to depend on $k$), there is a randomized algorithm $\mathcal{A}$ which takes as input a \mc instance $(\mathcal{U},\mathcal{S})$ and with probability $1-o(1)$, outputs  $\Gamma(k)$ many 2-\CSP instances $\{\Pi_i\}_{i\in [\Gamma(k)]}$, for some computable function $\Gamma:\N\to\N$, such that the following holds.  
\begin{description}
    \item[Running Time:] $\mathcal{A}$ runs in time $T(k)\cdot \poly(|\mathcal{U}|+|\mathcal{S}|)$, for some computable function $T:\N\to\N$. 
    \item[Size:] For every $i\in[\Gamma(k)]$, we have that  $\Pi_i$ is defined on $\Lambda(k)$ variables over an alphabet of size $\poly(|\mathcal{U}|+|\mathcal{S}|)$ for some computable function $\Lambda:\N\to\N$. 
    \item[Completeness:] Suppose there exist $k$ sets in $\mathcal{S}$ such that their union is of size $\tau \cdot |\mathcal{U}|$. Then, there exists $i\in [\Gamma(k)]$ and an assignment to the variables of $\Pi_i$ that satisfies all its constraints. 
    \item[Soundness:]  Suppose that for every $k$ sets in $\mathcal{S}$, their union is of size at most $(1-\delta)\cdot \tau \cdot |\mathcal{U}|$. Then, for every $i\in [\Gamma(k)]$ we have that every assignment to the variables of $\Pi_i$ satisfies at most $\left(1-\frac{\delta^2}{4}\right)$ fraction of the constraints of $\Pi_i$.
\end{description}
\end{theorem}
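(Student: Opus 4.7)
The plan is to follow the three-step outline that the authors themselves sketch after the informal statement of Theorem~\ref{thm:mctocsp}. In the first step I would sub-sample the universe using Chernoff and a union bound: pick $\mathcal{U}' \subseteq \mathcal{U}$ uniformly at random of size $m = \Theta\bigl(k\log|\mathcal{S}|/(\delta\tau)^2\bigr)$ and apply Theorem~\ref{thm:chernoff} to each of the $\binom{|\mathcal{S}|}{k}$ possible $k$-tuples of sets. With probability $1-o(1)$ the coverage fraction of every $k$-tuple is preserved up to additive error $\delta\tau/10$; henceforth I would work with the shrunken instance $(\mathcal{U}', \mathcal{S})$, where $|\mathcal{U}'| = O_k(\log|\mathcal{S}|)$. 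This step is packaged as Lemma~\ref{lem:uni-red}.

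In the second step I would equipartition $\mathcal{U}'$ into $t = O_k(1)$ sub-universes $U_1, \ldots, U_t$ and enumerate ``guesses'' $(\eta_1, \ldots, \eta_t)$ for the per-block coverage fraction of the target $k$-tuple, each $\eta_j$ taken in $\{0, 1/L, 2/L, \ldots, 1\}$ for a granularity $L = \Theta(1/\delta^2)$; this gives $\Gamma(k) = (L+1)^t$ guesses, a function of $k$ only. For each guess I would build a Valued 2-\CSP (Lemma~\ref{lem:cov-to-vcsp}) whose variables are $k$ ``set'' variables $X_1, \ldots, X_k$ with domain $\mathcal{S}$ together with $t$ ``block'' variables $Y_1, \ldots, Y_t$, where $Y_j$ ranges over $(2^{U_j})^k$ and encodes the claimed intersections $(T^j_1, \ldots, T^j_k)$ of the $k$ chosen sets with $U_j$. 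Each edge $(X_i, Y_j)$ carries the Boolean value function $\ind[T^j_i = S_{X_i}\cap U_j]$, and the domain of $Y_j$ is further restricted to those tuples satisfying the unary threshold $|\bigcup_{i'} T^j_{i'}| \geq \eta_j |U_j|$. In the completeness case, the guess obtained by rounding the per-block coverages of an optimum $k$-tuple down to the grid admits the ``honest'' assignment achieving value $1$. For the soundness direction, I would show that for any guess with $\tfrac{1}{t}\sum_j \eta_j$ sufficiently above $(1-\delta)\tau$, an assignment achieving value $> 1 - \delta^2/4$ would force the actual coverage of $\mathcal{U}'$ by the chosen $k$ sets to exceed $(1-\delta)\tau|\mathcal{U}'|$: any block $j$ all of whose incident edges are satisfied must have $C_j \geq \eta_j |U_j|$, and the number of ``damaged'' blocks (those carrying at least one unsatisfied edge) is bounded by $\delta^2/4$ times the total edge count, contradicting Step~1.

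The third step is the standard discretization from Valued 2-\CSP to plain 2-\CSP (Lemma~\ref{lem:vcsp-to-csp}): for each Valued edge $e$ whose value function $f_e$ takes values in $\{0, 1/N, \ldots, 1\}$, introduce $N$ parallel Boolean edges, the $\ell$-th being satisfied iff $f_e(\psi_u, \psi_v) \geq \ell/N$; choosing $N$ large enough makes the 2-\CSP value equal to the Valued 2-\CSP value up to a rounding slack that can be absorbed into $\delta^2/4$. The principal obstacle I expect is the quantitative soundness analysis in Step~2: I have to choose $t$, $L$, $m$, and the Chernoff slack all as appropriate functions of $k,\delta,\tau$ so that (i) among the $\Gamma(k)$ enumerated guesses there is a ``valid'' one above the soundness threshold whose corresponding 2-\CSP is fully satisfiable in completeness, and (ii) the ``damage budget'' coming from the $\delta^2/4$ fraction of unsatisfied edges does not exceed the coverage gap $\delta\tau$ preserved by Step~1; in particular, one must be careful that an inconsistent edge $(X_i, Y_j)$ cannot inflate the claimed union $|\bigcup_{i'} T^j_{i'}|$ beyond the actual coverage $C_j$ by more than $|U_j|$, so that the per-block accounting closes. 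Once these parameters are aligned, composing the three lemmas yields the randomized Turing reduction claimed by the theorem, with the $o(1)$ failure probability coming entirely from Step~1.
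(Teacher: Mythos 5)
Your three-step skeleton matches the paper's (universe reduction, coverage-to-CSP, value-discretization), and your Step~1 direct subsampling is a valid alternative to the paper's random-hashing construction in Lemma~\ref{lem:uni-red}. However, there is a genuine gap in your Step~2 soundness accounting that breaks the argument for $k$ larger than a function of $\delta$.

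In your version of Lemma~\ref{lem:cov-to-vcsp}, each edge $(X_i, Y_j)$ carries the Boolean payoff $\ind[T^j_i = S_{X_i}\cap U_j]$, and you argue that an assignment of value $> 1 - \delta^2/4$ leaves at most a $\delta^2/4$ fraction of edges unsatisfied, hence at most $\tfrac{\delta^2}{4}\cdot kt$ ``damaged'' blocks (those touching some unsatisfied edge). But that count can exceed $t$, the total number of blocks, once $k \geq 4/\delta^2$; and since $\delta$ is an arbitrary function of $k$ (e.g.\ a constant such as $1/2$), this regime is exactly the interesting one. A damaged block can inflate its claimed union by up to $|U_j| \approx |\U'|/t$, so the total inflation is bounded only by $\tfrac{\delta^2}{4}\,k\,|\U'|$, which dwarfs the coverage gap $\delta\tau'|\U'|$ you are trying to protect. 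The root cause is that a single inconsistent edge ruins an entire block's coverage claim, yet the soundness budget for inconsistent edges scales with $kt$ rather than $t$, so the argument loses a factor of $k$ that the theorem does not allow. The paper's Lemma~\ref{lem:cov-to-vcsp} sidesteps this precisely by making the edge payoff fine-grained: $f_{(x_i,y_j)}(\sigma_u,\sigma_v) = |\sigma_v^{-1}(i)|/|\U|$ when $\sigma_v^{-1}(i)\subseteq S_{\sigma_u}$ and $0$ otherwise. A false per-color coverage claim is then charged exactly its own size rather than the whole block, and the paper obtains the clean identity $\val(\Pi) = \opt(\U,\CS)/(kM)$, giving exact gap preservation rather than an accounting argument.

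A secondary remark on Step~3: your parallel-Boolean-edges discretization preserves the (fractional) Valued-\CSP value, but it does not produce perfect completeness from a Valued instance whose completeness value is $c<1$. That is why the paper's Lemma~\ref{lem:vcsp-to-csp} instead enumerates threshold vectors $\theta\in[B]^{|E|}$ and rounds each edge into a hard constraint relative to the guessed $\theta_e$; this upgrades a value-$c$ Valued instance to a perfectly satisfiable $2$-\CSP for the correct guess. In your plan this does not bite only because your Step~2 is already a Boolean \CSP with claimed completeness value~$1$ (making Step~3 essentially vacuous), but it does mean your Step~3 could not repair a proportional-reward Step~2 in the style of the paper. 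The gap to fix is in the Step~2 soundness.
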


The proof of the theorem follows in three steps. 
We start by providing a randomized reduction which shows that we may assume w.l.o.g. that $|\U| \leq O_k(\log n)$. The rough idea is to use random hashing and subsampling to reduce the domain, as formalized below.

\begin{lemma} \label{lem:uni-red}
For every $\tau>0$ and   $\delta \in (0,1/2]$ (where both $\tau,\delta$ may or may not depend on $k$), there is a randomized \FPT reduction (that holds w.p.\ $1-o(1)$) from $\gapmaxcpv(\tau, (1 - \delta)\tau)$ to $\gapmaxcpvbd(\tau', (1 - \eps)\tau')$ for $\eps = \delta^2/2$ and $\tau'=\delta (1-\delta)\cdot (1+\delta^2)$.
\end{lemma}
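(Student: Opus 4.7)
The plan is to produce the reduced instance by uniform random subsampling of the universe, followed by a short deterministic normalization step that adjusts the completeness to the prescribed value $\tau'$. Let $n := |\CS|$. I first choose a sampling probability $p$ satisfying $p \cdot |\U| = c_{k,\delta} \log n$ for a sufficiently large constant $c_{k,\delta}$ (depending on $k$ and $\delta$), form $\U'$ by including each $u \in \U$ independently with probability $p$, and set $S_i' := S_i \cap \U'$ for each $i$. A single Chernoff bound on $|\U'|$ then forces $|\U'| = \Theta_{k,\delta}(\log n) \leq O_k(\log n)$ with probability $1 - o(1)$, meeting the universe-size restriction of $\gapmaxcpvbd$.

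For the main concentration step, I fix any $k$-tuple $I$ with original coverage fraction $\alpha_I := |\bigcup_{i \in I} S_i|/|\U|$ and note that $|\bigcup_{i \in I} S_i'|$ and $|\U'|$ are binomial with means $\alpha_I p|\U|$ and $p|\U|$. Applying Theorem~\ref{thm:chernoff} in two regimes---the stated multiplicative form when $\alpha_I$ exceeds the target additive error $\zeta$, and a standard large-deviation version (obtained by substituting a deviation parameter $\geq 1$ into the Chernoff derivation) when $\alpha_I < \zeta$---shows in either case that the subsampled fraction $\alpha_I' := |\bigcup_{i \in I} S_i'|/|\U'|$ lies in $[\alpha_I - \zeta, \alpha_I + \zeta]$ except with probability $\exp(-\Omega(c_{k,\delta} \zeta^2 \log n))$. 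Taking $c_{k,\delta}$ large enough as a function of $k$ and $\delta$ makes this $\leq n^{-(k+1)}$, so a union bound over the $\binom{n}{k}$ choices of $I$ succeeds with probability $1 - o(1)$. On this event, the subsampled instance simultaneously satisfies completeness $\geq \tau - \zeta$ and soundness $\leq (1-\delta)\tau + \zeta$.

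The final step is to turn this raw gap into the specific form $(\tau', (1-\eps)\tau')$. For this I append $M$ dummy universe elements to $\U'$: each dummy lies in every set of $\CS'$ (if $\tau < \tau'$) or in no set (if $\tau \geq \tau'$). A single linear equation fixes $M$ so that the completeness fraction of the padded instance is exactly $\tau'$; the soundness fraction then follows automatically, and a short algebraic check confirms it is at most $(1-\eps)\tau'$ provided $\zeta$ is below an explicit $\delta$-dependent threshold, which is enforced by choosing $c_{k,\delta}$ large enough. The specific algebraic values $\tau' = \delta(1-\delta)(1+\delta^2)$ and $\eps = \delta^2/2$ are convenient normalizations, chosen with the subsequent reduction (Lemma~\ref{lem:cov-to-vcsp}) in mind, that leave enough slack for both padding cases to go through uniformly in $\tau$. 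Since $M = O_\delta(|\U'|)$, the padded universe still has size $O_k(\log n)$.

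The main obstacle I anticipate is balancing the universe-size budget $|\U'| = O_k(\log n)$ against the $\binom{n}{k}$-way union bound, since a smaller $\U'$ weakens the per-tuple Chernoff concentration and could easily destroy the soundness guarantee on $k$-tuples with near-zero $\alpha_I$, where the relative Chernoff bound degenerates. This is what forces the two-regime Chernoff application and pins $c_{k,\delta}$ as a specific growing function of both $k$ and $1/\delta$. Once $\zeta$ has been controlled at roughly the $\delta \tau$ scale, the padding manipulation is purely algebraic and the exact numerical constants in the target gap $(\tau', (1-\eps)\tau')$ drop out of a routine calculation.
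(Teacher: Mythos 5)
Your approach---uniform random subsampling of $\U$ followed by affine padding---is genuinely different from the paper's, and it has a real gap in the padding step. Uniform subsampling preserves coverage fractions, so after that step the completeness threshold sits near $\tau$ and soundness near $(1-\delta)\tau$, whereas the prescribed targets $\tau' = \delta(1-\delta)(1+\delta^2)$ and $(1-\eps)\tau'$ depend only on $\delta$. When $\tau$ is small compared to $\tau'$ you are in your ``dummies in every set'' regime, which is the affine map $g(\alpha) = \frac{\alpha\,|\U'| + M}{|\U'|+M}$; choosing $M$ so that $g(\tau - \zeta) = \tau'$ forces the scale factor $\frac{|\U'|}{|\U'|+M} = \frac{1-\tau'}{1-\tau+\zeta}$, and then $\tau' - g\bigl((1-\delta)\tau + \zeta\bigr) = \frac{(1-\tau')(\delta\tau - 2\zeta)}{1-\tau+\zeta}$. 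Even taking $\zeta = 0$, the requirement that this be at least $\eps\tau'$ is the inequality $\frac{\tau(1-\tau')}{\tau'(1-\tau)} \geq \frac{\delta}{2}$, which fails once $\tau \ll \delta$. Concretely, $\delta = 0.4$ and $\tau = 0.01$ give $\tau' \approx 0.278$, $(1-\eps)\tau' \approx 0.256$, but $g\bigl((1-\delta)\tau\bigr) \approx 0.275$, so the padded soundness exceeds the target. Your claimed ``short algebraic check'' that both padding cases ``go through uniformly in $\tau$'' is therefore not correct, and the lemma as stated requires all $\tau > 0$.

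The paper sidesteps this entirely with a \emph{nonlinear} random transformation rather than subsampling. It builds a fresh universe $\U' = [m]$ and declares $j \in S'_i$ iff $S_i$ contains some $u$ with $Y_{u,j}=1$, where the $Y_{u,j}$ are independent Bernoulli$(p)$ with $p = \delta/(\tau|\U|)$. Then the probability that a fixed $j$ is covered by a $k$-tuple whose original coverage fraction is $\alpha$ equals $1 - (1-p)^{\alpha|\U|} \approx 1 - e^{-\delta\alpha/\tau}$, a concave map that sends $\tau \mapsto \Theta(\delta)$ and $(1-\delta)\tau \mapsto \Theta(\delta - \delta^2)$ \emph{independently of $\tau$}, leaving a clean $\Theta(\delta^2)$ multiplicative gap, after which a single Chernoff bound and a union bound over the $n^k$ tuples finish the proof. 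No affine post-processing of a linear subsample can reproduce this $\tau$-independent renormalization; if you want to keep the subsampling viewpoint you would need to replace the padding by a genuinely nonlinear operation, which is exactly what the hashing provides in one step.
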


In the second step, we show how to reduce the small-universe \mc instance to a Valued \CSP instance. The overall idea is to create a set of variables $x_1, \ldots, x_k$ where $x_i$ represents the $i$-th set selected in the solution. To check that they cover a large number of constraints, we partition the universe into $M$ groups $\U_1, \ldots, \U_M$ each of size $O(\log n / \log k)$ where the small-universe \mc instance gurantees that $M=O_k(1)$. For each partition $j \in [M]$, we create a variable $y_j$. The variable encodes how $x_1, \ldots, x_k$ covers the $j$-th partition $\U_j$. Namely, each $\sigma \in \Sigma_{y_j}$ encodes whether each element (in $\U_j$) is covered and, if so, by which set. Notice that there can be as many as $(k + 1)^{|\U_j|}$ possibilities here, but this is not an issue since $|\U_j| = O(\log n / \log k)$. The constraints are then simply the consistency checks between $x_i, y_j$ where the values represents the number of elements the $i$-th set covers in $\U_j$. Formally, we prove the following.

\begin{lemma} \label{lem:cov-to-vcsp}
\begin{sloppypar}For every $\tau,\eps>0$  (where both $\tau,\eps$ may or may not depend on $k$), there is a deterministic \FPT reduction from $\gapmaxcpvbd(\tau, (1 - \eps)\tau)$ to $\gapvcsp(c, c(1 - \eps))$ where $c = O_k(\tau)$.\end{sloppypar}
\end{lemma}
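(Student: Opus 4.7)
The plan is to realize the three-line construction sketched before the lemma. I would set $n:=|\CS|$, $B:=\lceil\log n/\log(k+1)\rceil$, and $M:=\lceil|\U|/B\rceil$; since $|\U|\le O_k(\log n)$, this gives $M=O_k(1)$. After padding $\U$ with $MB-|\U|$ dummy elements (lying in no set of $\CS$) so that $|\U|=MB$, I would partition $\U=\U_1\dot\cup\cdots\dot\cup\U_M$ into blocks of size exactly $B$, and construct the Valued 2-\CSP instance $\Pi$ on variables $\{x_1,\dots,x_k,y_1,\dots,y_M\}$ with $\Sigma_{x_i}:=\CS$, $\Sigma_{y_j}:=\{0,1,\dots,k\}^{\U_j}$, and complete bipartite edge set $E:=\{(x_i,y_j):i\in[k],\,j\in[M]\}$. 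The intended semantics is that $\psi_{x_i}$ encodes the $i$-th chosen set, while $\psi_{y_j}(e)=i>0$ claims ``$e$ is covered by the $i$-th chosen set'' and $\psi_{y_j}(e)=0$ claims ``$e$ is uncovered.'' Accordingly, I would define
\[
f_{(x_i,y_j)}(s,\sigma)\;:=\;\frac{|\{e\in\U_j:\sigma(e)=i\text{ and }e\in s\}|}{B},
\]
which credits an edge only for elements honestly claimed \emph{and} actually covered by the chosen set.

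For completeness, given $k$ sets $S_1,\dots,S_k\in\CS$ whose union has size $\tau N$ (where $N$ is the original size of $\U$ before padding), I would set $\psi_{x_i}:=S_i$ and $\psi_{y_j}(e):=\min\{i:e\in S_i\}$, with the convention $\min\emptyset=0$. Honesty of this pattern forces $\sum_i f_{(x_i,y_j)}(\psi_{x_i},\psi_{y_j})=|\U_j\cap\bigcup_i S_i|/B$, which sums over $j$ to $\tau N/B$ (dummies contribute $0$), giving $\val_\Pi(\psi)=\tau N/(kMB)=:c$. For soundness, I would argue that for any $\psi$ and any $e\in\U_j$, the index $i^*:=\psi_{y_j}(e)$ is the \emph{only} $i$ for which $e$ can possibly contribute to $f_{(x_i,y_j)}$, and it contributes $1/B$ only when $e\in\psi_{x_{i^*}}$; summing gives $\sum_{i,j}f\le|\bigcup_i\psi_{x_i}|/B$, which under the hypothesis is $<(1-\eps)\tau N/B$, i.e., $\val_\Pi(\psi)<(1-\eps)c$. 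Since $BM\ge N$, we have $c\le\tau/k$, so $c=O_k(\tau)$ as required.

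All the parameters are FPT: there are $k+M=O_k(1)$ variables, the largest alphabet has size $(k+1)^B\le(k+1)n$, and $\Pi$ is produced deterministically in time $\poly(|\U|+|\CS|)$. The main bookkeeping obstacle will be the padding: one must verify that adding elements uncovered by every set does not disturb either side of the gap, which works here because the value function only rewards honestly-covered elements, so dummies contribute $0$ in both the completeness and soundness bounds and merely rescale the common numerator $\tau N/B$. Everything else is mechanical unwinding of the two displayed sums above.
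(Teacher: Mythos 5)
Your construction is essentially the same as the paper's (same variables $x_1,\dots,x_k,y_1,\dots,y_M$, same alphabets, same consistency structure), and the argument is correct. The one genuine difference is the value function: the paper gates on an all-or-nothing condition, setting $f_{(x_i,y_j)}(\sigma_u,\sigma_v)=|\sigma_v^{-1}(i)|/|\U|$ only when $\sigma_v^{-1}(i)\subseteq S_{\sigma_u}$ and $0$ otherwise, while you give additive partial credit for each element that is both claimed by $\sigma_v$ and actually covered. Both choices yield the same bound $\sum_{i,j}f\le|\bigcup_i S_{\psi_{x_i}}|/B$ in soundness and the same completeness value, so the change is harmless; your ``soft'' version arguably makes the soundness sum a one-liner, at the cost of giving up the paper's cleaner exact identity $\val(\Pi)=\opt(\U,\CS)/(kM)$. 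The padding and the normalization by $B$ rather than $|\U|$ are bookkeeping choices that you handle correctly.
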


The final part is the following lemma which shows that, in the \FPT world, $\gapvcsp$ reduces to $\gapcsp$. At a high level, this reduction is done by guessing the values of each edge (in the optimal solution) and turning that into a ``hard'' constraint as in a 2-\CSP. 

\begin{lemma} \label{lem:vcsp-to-csp}
For any $c > s > 0$ such that $c/s \geq 1 + \Omega_k(1)$, there is a deterministic \FPT (Turing) reduction from $\gapvcsp(c, s)$ to $\gapcsp(1, 1 - \eps)$ for $\eps = \frac{c/s - 1}{c/s + 1}$.
\end{lemma}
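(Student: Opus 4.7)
The plan is to turn $\Pi$ into an FPT-sized family of weighted $2$-\CSP instances by guessing, for each edge $e$, a threshold $\theta_e$ on a small additive grid, and asking the $\gapcsp$ oracle to certify that one such instance is perfectly satisfiable. Concretely, set $\eta := (c-s)/4$, a filter value $c' := (c+s)/2$, and $\eps := (c-s)/(c+s) = (c/s-1)/(c/s+1)$. I enumerate all $\vec\theta \in (\{0,\eta,2\eta,\ldots\}\cap[0,1])^{E}$; for each guess I construct $\Pi_{\vec\theta}$ on the same vertices and alphabets with weight $w_e := \theta_e$ and hard constraint $C_e := \{(a,b)\in\Sigma_u\times\Sigma_v : f_e(a,b)\ge\theta_e\}$, discard $\vec\theta$ if $\sum_e \theta_e < c'\cdot|E|$, and otherwise feed $\Pi_{\vec\theta}$ (converted to an unweighted 2-\CSP via the standard FPT reduction of \cite{CrescenziST01}) to the $\gapcsp(1, 1-\eps)$ oracle; the overall answer is YES iff some oracle call returns YES. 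Under the hypothesis $c/s\ge 1+\Omega_k(1)$ one has $1/\eta = O_k(1)$, and since $|E|\le\binom{k}{2}$ the grid has at most $(O_k(1))^{O(k^2)}=f(k)$ elements, keeping the reduction FPT.

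For completeness, given $\psi^*$ with $\val_\Pi(\psi^*)\ge c$, pick $\theta^*_e := \eta\lfloor f_e(\psi^*_u,\psi^*_v)/\eta\rfloor$. Then $\theta^*_e\le f_e(\psi^*_u,\psi^*_v)$, so $\psi^*$ satisfies every hard constraint of $\Pi_{\vec\theta^*}$, while
\[
\sum_e \theta^*_e \;\ge\; \sum_e\bigl(f_e(\psi^*_u,\psi^*_v)-\eta\bigr) \;\ge\; (c-\eta)|E| \;=\; \tfrac{3c+s}{4}|E| \;\ge\; c'|E|,
\]
so $\vec\theta^*$ survives the filter and its oracle call returns YES. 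For soundness, assume $\val(\Pi)<s$ and fix any surviving $\vec\theta$ together with an arbitrary assignment $\psi$; since only edges with $f_e(\psi_u,\psi_v)\ge\theta_e$ are satisfied, the satisfied weight is
\[
\sum_{e:\,f_e(\psi)\ge\theta_e}\theta_e \;\le\; \sum_e f_e(\psi_u,\psi_v) \;<\; s|E|,
\]
while the total weight is at least $c'|E|$; hence the weighted satisfaction fraction is strictly less than $s/c'=2s/(c+s)=1-\eps$, and the oracle answers NO.

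The main step that requires care is aligning the discretization error with the filter so that the completeness inequality $c-\eta\ge c'$ and the soundness inequality $s/c'\le 1-\eps$ bite simultaneously at the advertised $\eps=(c/s-1)/(c/s+1)$; the symmetric choice $c'=(c+s)/2$, $\eta\le(c-s)/4$ delivers both. The only remaining bookkeeping is the weighted-to-unweighted conversion, which is routine and preserves the parameter $k$.
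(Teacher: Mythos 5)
Your construction and analysis match the paper's approach almost exactly: discretize the edge values $\theta_e$, discard guesses whose total falls below the filter threshold $(c+s)/2$, form a weighted $2$-\CSP with hard constraints $f_e(\cdot)\ge\theta_e$, and output YES iff some oracle call accepts. The filter threshold $(c+s)/2$ and the target gap $\eps=\frac{c-s}{c+s}$ are the same as in the paper, and the completeness and soundness computations are correct.

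There is, however, a genuine gap in your FPT-ness claim. Your grid is $\{0,\eta,2\eta,\ldots\}\cap[0,1]$ with $\eta=(c-s)/4$, so the number of grid values per edge is about $4/(c-s)$. You assert that $c/s\ge 1+\Omega_k(1)$ gives $1/\eta=O_k(1)$, but that hypothesis only bounds $(c-s)/s$, not $c-s$ itself. If, say, $s=\Theta(1/n)$ and $c=2s$, the hypothesis is satisfied, yet $1/\eta=\Theta(n)$ and your enumeration over $\theta\in(\{0,\eta,\ldots\}\cap[0,1])^{E}$ has size $n^{\Theta(|E|)}$, which is not FPT. The paper avoids this by first noting that one may assume WLOG that every value $f_e(\sigma_u,\sigma_v)$ is strictly below $\gamma:=|E|\cdot s$ (otherwise that single pair already certifies $\val(\Pi)\ge s$), and then discretizing only the range $[0,\gamma)$ with $B=\lceil 2|E|/\eps\rceil=O_k(1)$ levels, making the total number of guesses $B^{|E|}=O_k(1)$. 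Your argument is repaired by adding the same WLOG step and capping the grid at $\gamma=|E|\cdot s$ instead of $1$: with $\eta=(c-s)/4$, the number of grid points then becomes $\frac{|E|\cdot s}{\eta}=\frac{4|E|}{c/s-1}=O_k(1)$ as required, and everything else in your proof goes through unchanged.
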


Finally, we put together the above three lemmas to prove Theorem~\ref{thm:mctocspformal}.

\begin{proof}[Proof of Theorem~\ref{thm:mctocspformal}]
    The algorithm $\mathcal{A}$ takes as input an instance of $\gapmaxcpv(\tau, (1 - \delta)\tau)$, applies the reduction in Lemma~\ref{lem:uni-red} to obtain an instance of $\gapmaxcpvbd(\tau', (1 - \frac{\delta^2}{2})\tau')$ for $\tau'=\delta (1-\delta)\cdot (1+\delta^2)$, and then applies the reduction in Lemma~\ref{lem:cov-to-vcsp} to obtain an instance of  $\gapvcsp(c, c(1 - \frac{\delta^2}{2}))$ where $c = O_k(\delta)$, and finally applies the reduction in Lemma~\ref{lem:vcsp-to-csp} to obtain an instance of $\gapcsp(1, 1 - \eps)$ for $\eps = \frac{\delta^2}{4-\delta^2}>\frac{\delta^2}{4}$. 
\end{proof}
\subsection{Step I: Universe Reduction for $k$-MaxCoverage}

In this subsection, we prove Lemma~\ref{lem:uni-red}. 

\begin{proof}[Proof of Lemma~\ref{lem:uni-red}]
Given $( \U,\CS = \{S_1, \ldots, S_n\})$, an  instance of the $\gapmaxcpv(\tau, (1 - \delta)\tau)$ problem, we create an instance $( \U',\CS' = \{S'_1, \ldots, S'_n\})$ of the $\gapmaxcpvbd(\tau', (1 - \eps)\tau')$ problem as follows.
Let $m := \lceil 12 k \cdot \delta^{-9} \log n  \rceil$ and $p := \frac{\delta}{\tau\cdot |\U|}$. Let $\U' = [m]$.
For each $u \in \U$ and $j \in [m]$, let $Y_{u, j}$ denote an i.i.d. Bernoulli random variable that is 1 with probability $p$. Then, for each $i \in [n]$ and $j \in [m]$, let $j$ belong to $S'_i$ if and only if there exist some $u\in S_i$ such that $Y_{u, j} = 1$. 

Fix any $S_{i_1}, \ldots, S_{i_k} \in \CS$. For every $j\in[m]$, let $X_j$ denote the indicator whether $j \in S'_{i_1} \cup \cdots \cup S'_{i_k}$. Note that $X_1, \ldots, X_m$ are i.i.d. and,
\begin{align*}
\Pr[X_j = 1] = 1 - (1 - p)^{|S_{i_1} \cup \cdots \cup S_{i_k}|}.
\end{align*}
Note also that $X_1 + \cdots + X_m$ is exactly equal to $|S'_{i_1} \cup \cdots \cup S'_{i_k}|$.

\paragraph{Completeness.} Suppose that $\opt(\U, \CS) \geq \tau$. Let $S_{i_1}, \ldots, S_{i_k}$ be an optimal solution in $(\U, \CS)$. Then, we have for each $j\in[m]$ that:
\begin{align*}
\Pr[X_j = 1] \geq 1 - (1 - p)^{\tau\cdot |\U|} \geq 1 - \frac{1}{(1 + p)^{\tau\cdot |\U|}} \geq 1 - \frac{1}{1 + p\tau|\U|} = \frac{\delta}{1+\delta}=\frac{1}{1-\delta^4}\cdot \tau',
\end{align*}
where the second inequality follows from Bernoulli's inequality and the last inequality follows from our choice of parameters. Applying \Cref{thm:chernoff} with $\zeta=\delta^4$ implies that $\Pr[X_1 + \cdots +X_m \geq \tau'] \geq 1 - \exp(-\delta^9 m/6)=1-o(1)$ as desired.

\paragraph{Soundness.} Suppose that $\opt(\U, \CS) < (1 - \delta)\tau$. Consider any $S_{i_1}, \ldots, S_{i_k} \in \CS$. We have for each $j\in[m]$ that:
\begin{align*}
\Pr[X_j = 1] \leq 1 - (1 - p)^{(1-\delta)\tau\cdot |\U|} \leq (1 - \delta)p\tau|\U| =\delta (1-\delta)=\frac{1}{1+\delta^2}\cdot \tau', 
\end{align*}
where the second inequality follows from Bernoulli's inequality. 
Again, applying \Cref{thm:chernoff} with $\zeta=\delta^4$ implies that $\Pr[X_1 + \cdots +X_m \geq \frac{1+\delta^4}{1+\delta^2} \tau'] \leq \exp(-\delta^9(1-\delta)m/3)\le \exp(-\delta^9m/6)$ (where we used that $\delta\le 1/2$). Taking the union bound over all $i_1, \ldots, i_k$ then implies that this holds for all $i_1, \ldots, i_k$ with probability at least $1 - \frac{n^k}{\exp(\delta^9m/6)}\geq 1-\frac{1}{n^k}=1-o(1)$.
\end{proof}

\begin{remark}\label{rem:color0}
    We note that we can mimic the above proof to extend Lemma~\ref{lem:uni-red}  to the multicolored \mc problem as well. In particular, this gives a randomized \FPT reduction (that holds w.p.\ $1-o(1)$) from multicolored $\gapmaxcpv(\tau, (1 - \delta)\tau)$ to multicolored $\gapmaxcpvbd(\tau', (1 - \eps)\tau')$ for $\eps = \delta^2/2$ and $\tau'=\delta (1-\delta)\cdot (1+\delta^2)$ as well.
\end{remark}

\subsection{Step II: Small-Universe $k$-MaxCoverage $\Rightarrow$ Valued \CSP}

In this subsection, we prove Lemma~\ref{lem:cov-to-vcsp}. 

\begin{proof}[Proof of Lemma~\ref{lem:cov-to-vcsp}]
Given an instance $( \U,\CS = \{S_1, \ldots, S_n\})$ of $\gapmaxcpvbd(\tau, (1 - \eps)\tau)$, we construct an instance $\Pi = (V, E, (\Sigma_v)_{v \in V}, (f_e)_{e \in E})$ of $\gapvcsp(c, c(1 - \eps))$ as follows:
\begin{itemize}
\item Let $M := \left\lceil \frac{|\U|}{\log |\CS|}\cdot  \log k \right\rceil=O_k(1)$ and let $\U_1 \dot\cup \cdots\dot\cup \U_M$ be a partition of $\U$ into nearly equal parts, each of size $|\U_i| = O(|\U| / M) = O(\log n / \log k)$.
\item Let $V = \{x_1, \ldots, x_k, y_1, \ldots, y_M\}$ and $E$ contains $(x_i, y_j)$ for all $i \in [k]$ and $j \in [M]$.
\item For each $i \in [k]$, let $\Sigma_{x_i} = [n]$.
\item For each $j \in [M]$, let $\Sigma_{y_j}$ contains all functions from $\U_j$ to $\{0, \ldots, k\}$.
\item For each $i \in [k], j \in [M]$, let $f_{(x_i, y_j)}$ be defined as follows:
\begin{align*}
f_{(x_i, y_j)}(\sigma_u, \sigma_v) =
\begin{cases}
\frac{|\sigma_v^{-1}(i)|}{|\U|} & \text{ if } \sigma_v^{-1}(i) \subseteq S_{\sigma_u}, \\
0 & \text{ otherwise.} 
\end{cases}
\end{align*}
\item Finally, let $c = \frac{\tau}{k \cdot M }$.
\end{itemize}

Note that the new parameter is $k + M = O_k(1)$. Furthermore, the running time of the reduction is polynomial since $|\Sigma_{y_j}| = (k+1)^{|\U_j|} = k^{O(\log n / \log k)} = n^{O(1)}$. Thus, the reduction is an \FPT reduction as desired.

We next prove the completeness and soundness of the reduction. In fact, we will argue that $\val(\Pi) = \frac{\opt(\U, \CS)}{k \cdot M }$, from which the completeness and soundness immediately follow. To see that $\val(\Pi) \geq \frac{\opt(\U, \CS)}{k \cdot M }$, let $S_{\ell_1}, \ldots, S_{\ell_k}$ denote an optimal solution. We let $\psi_{x_i} = \ell_i$ for all $i \in [k]$. As for $\psi_{y_j}$, we let $\psi_{y_j}(u)$ be 0 if $u \notin S_{\ell_1} \cup \cdots \cup S_{\ell_k}$; otherwise, we let $\psi_{y_j}(u) = i$ such that $S_{\ell_i}$ (if there are multiple such $i$'s, just pick one arbitrarily). It is obvious by the construction that $\psi_{y_j}^{-1}(i) \subseteq S_{\psi_{x_i}}$ for all $i \in [k]$ and $j \in [M]$. Thus, we have
\begin{align*}
\val(\Pi) \geq \val_{\Pi}(\psi) = \frac{1}{k \cdot M} \sum_{j \in [M]} \sum_{i \in [k]} \frac{|\psi_{y_j}^{-1}(i)|}{|\U|} &= \frac{1}{k \cdot M} \sum_{j \in [M]} \frac{|\U_j \cap (S_{\ell_1} \cup \cdots \cup S_{\ell_k})|}{|\U|} \\
&= \frac{1}{k \cdot M} \frac{|S_{\ell_1} \cup \cdots \cup S_{\ell_k}|}{|\U|} = \frac{\opt(\U, \CS)}{k \cdot M }.
\end{align*}
On the other hand, to show that $\val(\Pi) \leq \frac{\opt(\U, \CS)}{k \cdot M }$, let $\psi$ be any assignment of $\Pi$. We have
\begin{align*}
\val_{\Pi}(\psi) &\leq \frac{1}{k \cdot M} \sum_{j \in [M]} \sum_{i \in [k]} \frac{|\psi_{y_j}^{-1}(i) \cap S_{\psi_{x_i}}|}{|\U|} \\
&\leq \frac{1}{k \cdot M} \sum_{j \in [M]} \frac{|\U_j \cap (S_{\psi_{x_1}} \cup \cdots \cup S_{\psi_{x_k}})|}{|\U|} \\
&\leq \frac{1}{k \cdot M} \frac{|S_{\psi_{x_1}} \cup \cdots \cup S_{\psi_{x_k}}|}{|\U|} \leq \frac{\opt(\U, \CS)}{k \cdot M }. & \qedhere
\end{align*}
\end{proof}

\begin{remark}\label{rem:color}
    We can extend Lemma~\ref{lem:cov-to-vcsp}  to apply for the multicolored \mc problem as well in the following way. In particular, we can start from the  multicolored $\gapmaxcpv(\tau, (1 - \delta)\tau)$ problem and as described in Remark~\ref{rem:color0}, we can reduce it to the  multicolored $\gapmaxcpvbd(\tau', (1 - \eps)\tau')$ problem for $\eps = \delta^2/2$ and $\tau'=\delta (1-\delta)\cdot (1+\delta^2)$. Then, we note that we can mimic the proof of Lemma~\ref{lem:cov-to-vcsp} with one minor modification that for all $i\in[k]$, the alphabet set of variable $x_i$ are the indices of the $i^{\text{th}}$ collection of the input sets instead of the entire set $[n]$.
\end{remark}

\subsection{Step III: Valued \CSP $\Rightarrow$ 2-\CSP}

In this subsection, we prove Lemma~\ref{lem:vcsp-to-csp}. 

\begin{proof}[Proof of Lemma~\ref{lem:vcsp-to-csp}]
Let $\Pi = (V, E, (\Sigma_v)_{v \in V}, (f_e)_{e \in E})$ be a Valued 2-\CSP instance, and let $\ell = |E|$ and $\gamma = \ell \cdot s$.
We may assume that $\supp(f_e) \subseteq [0, \gamma)$ for all $e \in E$. Indeed, if any $\sigma_u \in \Sigma_u, \sigma_v \in \Sigma_v$ for $(u, v) \in E$ satisfies $f_{(u, v)}(\sigma_u, \sigma_v) \geq \ell \cdot s$, then assigning $\sigma_u$ to $u$ and $\sigma_v$ to $v$ alone already yields value at least $s$. We describe the reduction under this assumption. 

Let $B = \lceil 2 \ell / \eps \rceil$.
For each $\theta \in [B]^{|E|}$, check if $\frac{1}{|E|} \sum_{e \in E} \theta_e \geq B / \gamma \cdot \frac{s}{1-\eps}$. If not, then skip this $\theta$ and continue to the next one.
Otherwise, if this is satisfied, we create an instance $\Pi^\theta = (V, E, (\Sigma_v)_{v \in V}, (w^\theta_e)_{e \in E}, (C^\theta_e)_{e \in E})$ as follows:
\begin{itemize}
\item $V, E, (\Sigma_v)_{v \in V}$ remains the same as in $\Pi$.
\item For each $e = (u, v) \in E$, let $w^\theta_e = \theta_e$ and $C_e = \{(\sigma_u, \sigma_v) \mid f_e(\sigma_u, \sigma_v) \geq \gamma \cdot \theta_e / B\}$.
\end{itemize}

Note that the number of different $\theta$'s is $B^{|E|} = O(\ell / \eps)^{\ell} \leq 2^{O(\ell \log \ell)}$ and thus the above is an \FPT  reduction. We next prove the completeness and soundness of the reduction. 

\paragraph{Completeness.} Suppose that there is an assignment $\psi$ of $\Pi$ such that $\val_{\Pi}(\psi) \geq c$. Let $\theta^\psi$ be defined by $\theta^\psi_e := \lfloor B \cdot f_e(\psi_u, \psi_v) / \gamma \rfloor$ for all $e = (u, v) \in E$. Notice that
\begin{align*}
\frac{1}{|E|} \sum_{e \in E} \theta^\psi_e &= \frac{1}{|E|} \sum_{e = (u, v) \in E} \lfloor B \cdot f_e(\psi_u, \psi_v) / \gamma \rfloor \\
&\geq \frac{1}{|E|} \sum_{e = (u, v) \in E} \left(B \cdot f_e(\psi_u, \psi_v) / \gamma - 1\right) \\
&= B/\gamma \cdot \val_\Pi(\psi) - 1 \\
&\geq B/\gamma \cdot c - 1 \\
&\geq B / \gamma \cdot \frac{s}{1-\eps},
\end{align*}
where the last inequality follows from our choice of $B$ and $\eps$.

Thus, the instance $\Pi^{\psi}$ is considered in the construction. It is also obvious by the construction that $\Pi^{\psi}$ is indeed satisfiable.

\paragraph{Soundness.} Suppose (contrapositively) that for some $\theta$ with $\frac{1}{|E|} \sum_{e \in E} \theta_e \geq B / \gamma \cdot s$ such that $\Pi^\theta$ is not a NO instance of $\gapcsp(1, 1-\eps)$. That is, there exists an assignment $\psi$ such that $\val_{\Pi^\theta}(\psi) \geq 1-\eps$. From this, we have
\begin{align*}
\val_{\Pi}(\psi) &= \frac{1}{|E|} \sum_{e = (u, v) \in E} f_e(\psi_u, \psi_v) \\
&\geq \frac{1}{|E|} \sum_{e = (u, v) \in E} (\gamma \cdot \theta_e / B) \cdot \ind[(\psi_u, \psi_v) \in C_e] \\
&= \frac{1}{|E|} \gamma / B \cdot \val_{\Pi^\theta}(\psi) \cdot \left(\sum_{e \in E} \theta_e\right) \\
&\geq \frac{1}{|E|} \gamma / B \cdot (1 - \eps) \cdot \left(B / \gamma \cdot \frac{s}{1-\eps}\right) \\
&\geq s,
\end{align*}
where the first inequality is based on how $C^\theta_e$ is defined and the second inequality follows from $\val_{\Pi^\theta}(\psi) \geq 1-\eps$ and the assumption on $\theta$.

Thus, in this case, we have $\val(\Pi) \geq s$ as desired.
\end{proof}

\section{Reducing $k$-median to Multicolored $k$-MaxCoverage}

In this section, we prove the following formal version of Theorem~\ref{thm:kmedtomc}.

\begin{theorem}\label{thm:kmedtomcformal}\begin{sloppypar}
For every constant $\alpha,\delta>0$, there is an algorithm $\mathcal{A}$ which takes as input a $k$-median instance $((V,d),C,\mathcal{F},\tau)$  and outputs   $\Gamma(k)$ many multicolored \mc instances $\{(\mathcal{U}^i,\mathcal{S}^i:=\mathcal{S}_1^i\dot\cup \mathcal{S}_2^i\dot\cup \cdots \dot\cup \mathcal{S}_k^i)\}_{i\in [\Gamma(k)]}$, for some computable function $\Gamma:\N\to\N$, such that the following holds.\end{sloppypar}
\begin{description}
    \item[Running Time:] $\mathcal{A}$ runs in time $T(k)\cdot \poly(|V|)$, for some computable function $T:\N\to\N$. 
    \item[Size:] For every $i\in[\Gamma(k)]$, we have that $|\mathcal{U}^i|,|\mathcal{S}^i|=\poly(|V|\cdot \Delta)$, where $\Delta:=\frac{\underset{v,v'\in V}{\max}d(v,v')}{\underset{\tiny\substack{v,v'\in V\\ d(v,v')>0}}{\min}d(v,v')}$. 
    \item[Completeness:] Suppose that there exist $F\subseteq \mathcal{F}$ such that $|F|=k$ and $\cost(C,F)\le \tau$  then there exists $i\in [\Gamma(k)]$ and $(S_1,S_2,\ldots ,S_k)\in\mathcal{S}_1^i\times \mathcal{S}_2^i\times \cdots \times \mathcal{S}_k^i$ such that $S_1\cup S_2\cup \cdots \cup S_k=\mathcal{U}^i$. 
    \item[Soundness:]  Suppose that for every $F\subseteq \mathcal{F}$ such that $|F|=k$ we have $\cost(C,F)\ge (1+\alpha+\delta)\cdot \tau$  then for every $i\in[\Gamma(k)]$ and every $(S_1,S_2,\ldots ,S_k)\in\mathcal{S}_1^i\times \mathcal{S}_2^i\times \cdots \times \mathcal{S}_k^i$ we have  $|S_1\cup S_2\cup \cdots \cup S_k|\le (1-\frac{\alpha}{2})\cdot |\mathcal{U}^i|$. 
\end{description}

A similar reduction also holds from the $k$-means problem to the multicolored \mc problem.
\end{theorem}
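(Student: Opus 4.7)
The plan is to extract and generalize the reduction implicit in the $(1+2/e+\varepsilon)$-approximation algorithm of \cite{CGLL19}: that algorithm converts a \kmed instance into a multicolored \mc instance, solves the latter approximately, and reads back a \kmed solution with bounded cost. Running this in reverse gives a gap-preserving reduction from \kmed to multicolored \mc, and I would calibrate its parameters so that the ratio $(1-1/e)\mapsto(1+2/e)$ of \cite{CGLL19} becomes the general $(1-\alpha/2)\mapsto(1+\alpha+\delta)$ trade-off. As preprocessing I would scale and round distances so that each $d(c,f)$ becomes an integer in $\{1,\dots,\Delta'\}$ with $\Delta'=\poly(|V|\cdot\Delta)$, at the cost of a $(1+\eta)$-factor distortion for a small $\eta\ll\delta$.

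Next I would enumerate a ``radius profile'' $\vec{\rho}=(\rho_1,\dots,\rho_k)$ in which each $\rho_i$ ranges over a geometric grid of $O_{\alpha,\delta}(1)$ values between roughly $\tau/|C|$ and $\tau$, producing $\Gamma(k)=O_{\alpha,\delta}(1)^k$ profiles in total. For each profile I would build one multicolored \mc instance whose universe is a carefully truncated subset of $C\times[M]$ (with $M=\poly(|V|\cdot\Delta)$) of total size $\Theta_{\alpha,\delta}(\tau)$: the element $(c,t)$ is included only for $t$ in a range determined by $\vec{\rho}$. For each color $i\in[k]$, the sets are $\{S_f^i:f\in\mathcal{F}\}$ with $S_f^i\ni(c,t)$ iff $t\ge d(c,f)$ and $t$ lies in color $i$'s allowed range, so that choosing $(f_1,\dots,f_k)$ covers $(c,t)$ precisely when some chosen facility serves $c$ at ``cost~$t$'' within slot $i$'s budget. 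For completeness I would order an optimal solution $F^*=\{f_1^*,\dots,f_k^*\}$, let $\rho_i^*$ be the largest distance at which $f_i^*$ serves a client, and pick the profile obtained by rounding each $\rho_i^*$ up to the grid; then $S^i_{f_i^*}$ covers every included $(c,t)$, yielding exact full coverage.

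For soundness I would run a direct accounting argument showing that the number of uncovered elements is at least $(\alpha/2)|\mathcal{U}^{\vec{\rho}}|$ whenever $\cost(C,\{f_1,\dots,f_k\})\ge(1+\alpha+\delta)\tau$; the calibration $|\mathcal{U}^{\vec{\rho}}|=\Theta_{\alpha,\delta}(\tau)$ converts the cost excess $(\alpha+\delta)\tau$ into the desired $\alpha/2$ coverage gap, which is exactly the place where the numerical constants of the theorem emerge. The main obstacle I expect is calibrating truncation range and profile granularity so that completeness is \emph{exactly} full (not merely $1-o(1)$) on the correct profile while soundness holds simultaneously for every profile; the technical subtlety is ``outlier'' clients whose $d(c,F)$ greatly exceeds any $\rho_i$ in the profile, which in the uncolored variant would let an adversary inflate coverage through overlapping balls without paying a proportional cost. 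Multicoloring resolves this by pinning each client to one specific facility slot so that outlier contributions cannot be hidden (this is the point of Remark~\ref{rem:mc} and the reason the arrow is dashed in Figure~\ref{fig:enter-label}). The $k$-means case follows the same outline with all distances replaced by their squares, the same profile enumeration, and an analogous accounting argument in the soundness step.
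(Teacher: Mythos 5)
Your proposal is structurally quite different from the paper's proof, and I believe it has genuine gaps. The paper's reduction (following Cohen-Addad et al.) is built on two ingredients that your sketch omits, and both are load-bearing.

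First, the paper enumerates \emph{leaders} $\ell_1,\ldots,\ell_k\in C$ (clients, not radii) together with radii $R_1,\ldots,R_k$, and uses them to define the color classes $F_i := \{f\in\calF : d(f,\ell_i)\leq R_i\}$. The $k$ colors of the multicolored \mc instance are precisely these facility pools, one per leader; the partition matroid constraint ``pick one facility from each $F_i$'' is what makes the coloring meaningful. In your proposal the color classes all contain the entire facility set $\calF$, and the multicoloring instead partitions the \emph{cost levels} $t$. This breaks completeness: if the optimal facility $f^*_1$ is the one serving client $c$, then $S^1_{f^*_1}$ can cover $(c,t)$ only for $t$ in color 1's range, and the elements $(c,t)$ with $t$ in color $j$'s range for $j\neq 1$ must be covered by $S^j_{f^*_j}$, i.e.\ we would need $t\geq d(c,f^*_j)$ for a facility $f^*_j$ that does not serve $c$ and may be arbitrarily far away. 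Exact full coverage in the completeness case therefore does not follow from your construction.

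Second, the paper introduces \emph{fictitious facilities} $F'=\{f'_1,\ldots,f'_k\}$ at distance $R_i$ from $\ell_i$, establishes $\cost(C,F')\leq (3+\eps)\cdot\opt$, and then realizes the improvement function $\improv(S)=\cost(C,F')-\cost(C,S\cup F')$ as a weighted coverage function. For each client $c$ the element set $E_c$ has total weight $d(c,F')-d(c,\calF\cup F')$, so the universe size is automatically bounded by the reference cost $\cost(C,F')=O(\opt)$, and covered weight translates exactly into cost improvement over $F'$. Your construction has no analogue of $F'$, so there is no per-client cap on the number of elements $(c,t)$: a client whose distance to every facility is large would contribute either too few elements (and be invisible) or force $M$ to scale with the worst-case distance, destroying the calibration $|\U^{\vec\rho}|=\Theta(\tau)$ that your soundness accounting relies on. The ``outlier clients'' problem you flag is real, but the paper's fix is precisely the fictitious-facility truncation $d_c=d(c,F')$, not multicoloring per se — and your sketch does not supply a substitute. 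To repair the argument you would essentially need to reintroduce leader enumeration (to get the coloring and the $(3+\eps)$-bounded reference solution) and per-client truncation at $d(c,F')$ (to get the nested, bounded-weight universe), at which point you recover the paper's construction.
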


Thus, from the above theorem we can show that a $(1 - 1/e - \eps)$-\FPT approximation for  the multicolored \mc problem implies a $(1 + 2/e + 3\eps)$-\FPT approximation for the $k$-median problem (by setting $\alpha=2\eps+(2/e)$ and $\delta=\eps$ in Theorem~\ref{thm:kmedtomcformal}). 
This reduction was almost established in Cohen-Addad et al.~\cite{cohen2019tight} who gave an $(1+2/e+\eps)$-approximation for $k$-median in time $(k \log k/\eps)^{O(k)} \poly(n)$, using a $(1-1/e)$-approximation algorithm for {\em Monotone Submodular Maximization with a (Partition) Matroid Constraint}: given a monotone submodular function $f : U \to \mathbb{R}_{\geq 0}$ and a partition matroid $(U, \calI)$, compute a set $S \in \calI$ that maximizes $f(S)$. 
Here we observe that the multicolored \mc problem can replace the general submodular maximization.

\paragraph{Algorithm of~\cite{cohen2019tight}.} First, we summarize the key steps from~\cite{cohen2019tight} without proofs. 
First, they compute a {\em coreset} of size $O(k \log n / \eps^2)$; it is a (weighted) subset of clients $\calC' \subseteq \calC$ such that for any solution $F \subseteq \calF$ with $|F| = k$, the \kmed costs for $\calC'$ and $\calC$ are within a $(1+\eps)$ factor of each other. Therefore, for the rest of the discussion, let $\calC$ be the coreset itself and assume $|\calC| = O(k \log n / \eps^2)$.

Let $F^* = \{ f^*_1, \ldots, f^*_k \}$ be the centers in the optimal solution, and $C^*_i$ be the set of clients served by $f^*_i$ in the optimal solution. For each $i \in [k]$, let $\ell_i \in C^*_i$ be the client closest to $f^*_i$ (ties broken arbitrarily) and call it the {\em leader} of $C^*_i$. 

By exhaustive enumerations, in $(k \log n / \eps)^{O(k)}$ time (which is upper bounded by $(k \log k/\eps)^{O(k)}\poly(n)$ time), one can guess $\ell_1, \ldots, \ell_k$ as well as $R_1, \ldots, R_k$ such that $d(\ell_i, f^*_i) \leq [R_i/(1+\eps), R_i]$. 

Let $F_i := \{ f \in \calF : d(f, \ell_i) \leq R_i \}$. (One can assume $F_i$'s are disjoint by duplicating facilities.) 

At this point, opening an arbitrary $f_i \in F_i$ for each $i \in [k]$ ensures a $(3+\eps)$-approximation, as each client $c \in C^*_i$ can be connected to $f_i$ where 
\[
d(c, f_i) \leq 
d(c, f^*_i) + d(f^*_i, \ell_i) + d(\ell_i, f_i)
\leq (3 + \eps)\cdot d(c, f^*_i), 
\]
since $d(f^*_i, \ell_i) \leq d(c, f^*_i)$ follows from the definition of the leader and 
$d(\ell_i, f_i) \leq (1+\eps)d(\ell_i, f^*_i)$ by the definition of $R_i$ and $F_i$. 

To further improve the approximation ratio to $(1 + 2/e + \eps)$,~\cite{cohen2019tight} defined the function $\improv : \mathcal{P}(\calF) \to \mathbb{R}_{\geq 0}$ as follows (here $\mathcal{P}$ denotes the power set). First, for each $i \in [k]$, add a {\em fictitious facility} $f'_i$, whose distance to $\ell_i$ is $R_i$ and the distances to the other points are determined by shortest paths through $\ell_i$; i.e., for any $x$, we define $d(f'_i, x) := R_i + d(\ell_i, x)$. 
Let $F' := \{ f'_1, \ldots, f'_k \}$. The above paragraph's reasoning again also shows that: 
\begin{align}
    \label{eq1}
\cost(C, F') \leq (3+\eps)\cdot \opt.
\end{align} 

For $S \subseteq \calF$, $\improv(S) := \cost(C, F') - \cost(C, S \cup F')$. 
Since any $f \in F_i$ is at a distance at most $R_i$ from $\ell_i$, as long as $|S \cap F_i| \geq 1$ for every $i \in [k]$, we have $\cost(C, S \cup F') = \cost(C, S)$. 

\cite{cohen2019tight} proved that $\improv(\cdot)$ is monotone and submodular, so one can use \cite{calinescu2011maximizing}'s algorithm which obtains an $(1-1/e)$-approximation algorithm for Monotone Subdmoular Maximization with a Matroid Constraint to find $S$ such that $|S \cap F_i| = 1$ for every $i \in [k]$ and $\improv(S) \geq (1-1/e)\cdot \improv(F^*)$, which implies that for an approximate solution $S^*$ we have:
\begin{align}
 \cost(C, S^*) &= 
\cost(C, F') - \improv(S^*)\nonumber\\
&\leq \cost(C, F') - (1-1/e)\cdot \improv(F^*) \nonumber\\
&\leq  \cost(C, F') - (1-1/e)\cdot (\cost(C, F') - \opt)\nonumber\\
&= (1-1/e)\cdot \opt + (1/e)\cdot \cost(C,F')\nonumber\\
&\leq (1 + 2/e + \eps) \cdot \opt.\label{eq}
\end{align}

\paragraph{$\improv(\cdot)$ as a Coverage Function.}
Since the matroid constraint exactly corresponds to the {\em multicolor} part of the multicolored \mc problem, it suffices to show that $\improv(\cdot)$ can be realized as a coverage function; it will imply that a $(1-1/e - \eps)$-approximation algorithm for multicolored \mc problem will imply $(1 + 2/e + O(\eps))$-approximation for $k$-median in \FPT time. 

Actually, the structure of $\improv(\cdot)$ as the cost difference between two $k$-median solutions makes it easy to do so. Let us do it  for a {\em weighted} coverage function where each element $e$ has weight $w(e)$ and the goal is to maximize the total weight of covered elements. (It can be unweighted via standard duplication tricks.) For each $c \in C$, let $d_c := d(c, F')$ and let $F_c := \{ f \in \calF : d(f, c) < d_c \}$. Let $F_c = \{ f_1, \ldots, f_t \}$ ordered in the decreasing order of $d(f_i, c)$. We create $t$ elements $E_c := \{ e_{c,1}, \ldots, e_{c, t} \}$ where $w(e_{c,1}) = d_c - d(f_1, c)$ and  
$w(e_{c,j}) = d(f_{j-1}, c) - d(f_j, c)$ for $j \in \{2, \ldots, t \}$. We will have a set $S_f$ for each $f \in \calF$, and for each $c \in C$, if $f = f_i$ in $F_c$'s ordering, $S_f \cap E_c = 
\{ e_{c,1}, \ldots, e_{c,i} \}$. (If $f \notin F_c$, $S_f \cap E_c \} = \emptyset$.)

Then for any $F \subseteq \calF$ and for any client $c \in C$, our construction ensures that $(\cup_{f \in F} S_f) \cap E_c$ is equal to $S_{f_c} \cap E_c$ where $f_c$ is the closest facility to $c$ in $F$, and the total weight of 
$(\cup_{f \in F} S_f) \cap E_c$ is exactly equal to $d(F', c) - d(f_c, c)$, which is exactly the improvement of the cost of $c$ in $F \cup F'$ compared to $F'$. 
Since $\improv(\cdot)$ is the sum of all clients, this function is a coverage function. 

\begin{proof}[Proof of Theorem~\ref{thm:kmedtomcformal}]
The algorithm $\mathcal{A}$ on input $((V,d),C,\mathcal{F},\tau)$ using the notion of {\em coresets} and exhaustive enumerations (and using randomness), in \FPT time constructs an instance for each guess of the values of $\ell_1, \ldots, \ell_k$ and $R_1, \ldots, R_k$.  The choice of $\eps$ in the use of coresets will be specified later. Thus, for a fixed instance (with $\ell_1, \ldots, \ell_k$ and $R_1, \ldots, R_k$ fixed), construct $F_c$ for all $c\in C$, and then the (weighted) set-system $\{S_f\}_{f\in\mathcal{F}}$ over the universe $\underset{c\in C}{\cup} E_C$. Then, following the exact same calculations as in \eqref{eq}, we have that for a solution $S^*\subseteq \mathcal{F}$ such that $\improv(S) \geq \left(1-\frac{\alpha}{2}\right)\cdot \improv(F^*)$, (for some $\alpha\ge 0$), we have:
\begin{align}
 \cost(C, S^*) &= 
\cost(C, F') - \improv(S^*)\nonumber\\
&\leq \cost(C, F') - \left(1-\frac{\alpha}{2}\right)\cdot \improv(F^*) \nonumber\\
&\leq  \cost(C, F') - \left(1-\frac{\alpha}{2}\right)\cdot (\cost(C, F') - \opt)\nonumber\\
&= \left(1-\frac{\alpha}{2}\right)\cdot \opt + \frac{\alpha}{2}\cdot \cost(C,F')\nonumber\\
&\le  \left(1-\frac{\alpha}{2}\right)\cdot \opt + \frac{\alpha}{2}\cdot (3+\eps)\cdot \opt= \left(1+\alpha+\frac{\alpha\eps}{2}\right)\cdot \opt,\nonumber 
\end{align}
where the last inequality follows from \eqref{eq1}. 
The theorem statement completeness and soundness claims then follows by choosing $\eps=\frac{2\delta}{\alpha}$. Moreover, the reduction is clearly in \FPT time, and the weights of the elements we constructed are bounded by $\Delta$, which is the blowup that happens in the size of the set system to reduce to the unweighted multicolored \mc problem.

\begin{remark}\label{rem:kmean}
The theorem statement also holds for the $k$-means objective as \eqref{eq1} is revised to $\cost_2(C, F') \leq (9+\eps)\cdot \opt$ and we can thus conclude that for a solution $S^*\subseteq \mathcal{F}$ such that $\improv(S) \geq \left(1-\frac{\alpha}{8}\right)\cdot \improv(F^*)$,  we have $\cost_2(C, S^*)\le \left(1-\frac{\alpha}{8}\right)\cdot \opt + \frac{\alpha}{8}\cdot (9+\eps)\cdot \opt= \left(1+\alpha+\frac{\alpha\eps}{8}\right)\cdot \opt$. 
\end{remark}

\end{proof}

\begin{remark}\label{rem:mc}
Starting from the multicolored \mc problem, we can apply Theorem~\ref{thm:mctocspformal} to obtain a gap preserving reduction to 2-\CSP, and then simply apply Lemma~19 in \cite{CGLL19} to obtain a gap preserving reduction to (uncolored) \mc problem. However, at the moment we do not know how to directly reduce the multicolored \mc problem to the uncolored version (while retaining the gap), but this convoluted procedure suggests that a direct reduction might be plausible. 
\end{remark}

At this point one may wonder if it is possible to provide a gap preserving \FPT reduction from the multicolored \mc problem to the (unmulticolored) \mc problem. Such a reduction would help us avoid Remark~\ref{rem:color} and directly compose the results of Theorems~\ref{thm:kmedtomcformal}~and~\ref{thm:mctocspformal} in a blackbox manner. While reductions from the multicolored variant of a problem to its unmulticolored counterpart can be quite straightforward, such as for the $k$-clique and $k$-set cover problem, it can also be quite notorious such as for the $k$-set intersection problem \cite{karthik2020closest,BukhSN21}. Such reductions are aptly dubbed as ``reversing color coding''. To the best of our knowledge, reversing color coding for the \mc problem can be quite hard (if we want to preserve some positive constant gap). That said,  Remark~\ref{rem:mc} does provide a convoluted argument as to while a direct reversing color coding is currently out of reach, it can still be plausibly achieved. We note here that if we are promised that the multicolored \mc problem instance has all the input sets of roughly the same size then there is a simple reduction to the (uncolored) \mc problem by simply introducing $k$ new subuniverses, one for each color class.

\section{Inapproximability of $k$-MaxCoverage}
In this section, we prove (the \W$[1]$-hardness part of) Theorem~\ref{thm:mccomplete}.
In \cite{KLM19}, the authors had \emph{implicitly} proved the following: for some computable function $F$, it is \W$[1]$-hard given $(\U,\CS)$ to find $k$ sets in $\CS$  that cover $1-\nicefrac{1}{F(k)}$ fraction of $\U$ whenever there exists  $k$ sets in $\CS$ that cover $\U$.  We show below how that can be extended to obtain the more generalized result given in Theorem~\ref{thm:mccomplete}.

Our proof builds on the following \W$[1]$-hardness of gap $k$-\maxcover proved in \cite{KLM19}.

\begin{theorem}[\cite{KLM19}]\label{thm:ETHmax}
There exists a computable  function $A\colon \mathbb{N}\to\mathbb{N}$ such that it is \W$[1]$-hard to decide an instance $\Gamma=\left(G=(V\dot\cup W,E),1,\frac{1}{2}\right)$ of   $k$-\maxcover  even in the following setting:
\begin{itemize}
    \item $V:=V_1\dot\cup\cdots\dot\cup V_k$, where $\forall j\in[k]$, $|V_j|=n$.
    \item $W:=W_1\dot\cup\cdots\dot\cup W_\ell$, where $\ell=(\log n)^{O(1)}$ and $\forall i\in[k]$, $|W_i|=A(k)$.
\end{itemize}
\end{theorem}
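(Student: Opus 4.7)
The plan is to prove Theorem~\ref{thm:ETHmax} by reducing from the \W$[1]$-complete problem of multicolored $k$-\clique, fitting into the general paradigm of a \emph{parameterized PCP}: turning an equivalence problem into a gap problem while keeping the parameter structure tightly under control.

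The concrete steps I would carry out are: (i) begin with an instance of multicolored $k$-\clique on a graph $G = (V_1 \dot\cup \cdots \dot\cup V_k, E)$ with $|V_j| = n$; (ii) take the $k$ left super-nodes of the target \maxcover instance to be $V_1, \ldots, V_k$ directly, so that a labeling corresponds exactly to choosing one vertex per color class; (iii) construct the right super-nodes $W_1, \ldots, W_\ell$ as ``tests'' parameterized by a carefully chosen combinatorial structure---specifically a pseudorandom family (e.g.\ derived from an explicit error-correcting code or a $k$-wise independent hash family) that aggregates the $\binom{k}{2}$ edge-adjacency constraints into $\ell = (\log n)^{O(1)}$ tests, each admitting only $A(k)$ valid ``witness'' strings so that $|W_i| = A(k)$; (iv) argue completeness: a $k$-clique in $G$ yields a labeling that covers all $\ell$ tests; (v) argue soundness: if $G$ has no $k$-clique, then any labeling fails at least $1/2$ of the tests. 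The soundness step is the heart of the argument and should follow from a pseudorandomness property of the test family, which forces any local violation (a missing edge among the chosen $k$ vertices) to propagate into a constant fraction of failed tests.

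The hard part will be simultaneously controlling three parameters: the constant soundness gap of $1/2$, the bounded right super-node size $|W_i| = A(k)$, and the polylogarithmic number of tests $\ell = (\log n)^{O(1)}$. Naive gap-amplification tools such as parallel repetition inflate all three of these quantities by large (often polynomial) factors, so meeting the stated bounds simultaneously requires either an explicit pseudorandom construction (based on expanders or algebraic codes) or a probabilistic construction together with derandomization tailored to the parameterized regime. My plan would be to mirror the threshold-graph composition used in~\cite{KLM19}, pinning down all three parameters in one stroke; the chief quantitative challenge is to keep the alphabet contribution per test at only $A(k)$ while amplifying the gap away from the trivial $1 - 1/\binom{k}{2}$ coming from the bare reduction.
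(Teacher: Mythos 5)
The paper's proof of Theorem~\ref{thm:ETHmax} is a short citation to the \emph{distributed PCP} machinery of \cite{KLM19}: one applies Proposition~5.1 of that paper to Theorem~7.1 to obtain a $(0, O(\log_2 m), O(t), 1/2)$-efficient protocol for the $k$-player $\mathsf{MultEq}_{m,k,t}$ problem in the Simultaneous Message Passing (SMP) model, and then feeds the protocol's parameters into Corollary~5.5 of \cite{KLM19}, which converts any such protocol into a $\W[1]$-hard gap $k$-\maxcover instance. The bounds $|W_i| = A(k)$ and $\ell = (\log n)^{O(1)}$ fall out mechanically: $|W_i|$ is controlled by $2^{O(\text{per-player communication})}$, which is a function of $k$ alone because the $\mathsf{MultEq}$ protocol communicates $O(t)$ bits per player with $t$ depending only on $k$, while $\ell$ is controlled by the protocol's randomness, which is $O(\log m) = O(\log n)$.

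Your proposal is pointed in roughly the right direction---you correctly guess that the $V_j$'s from a multicolored $k$-\clique instance become the left super-nodes, that error-correcting codes are involved, and that the crux is getting soundness $1/2$ while keeping $|W_i|$ bounded by $A(k)$ and $\ell$ polylogarithmic. But as written there is a genuine gap: step~(v), which you yourself flag as ``the heart of the argument,'' is not an argument. Saying the test family has a ``pseudorandomness property... which forces any local violation to propagate into a constant fraction of failed tests'' states the property you need, not how to get it, and naming $k$-wise independent hash families is not enough: the whole quantitative difficulty is that after aggregating the $\binom{k}{2}$ edge checks you must keep $|W_i|$ independent of $n$, and a generic pseudorandom aggregation does not do this. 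The mechanism that does is the SMP protocol abstraction: each $W_i$ is indexed by a transcript of the $k$ players' messages on a fixed random string, so $|W_i| \leq 2^{k \cdot (\text{communication per player})}$, and completeness/soundness of $\maxcover$ correspond exactly to the protocol's completeness/soundness---this is the key idea your sketch is missing. You also describe the plan as mirroring ``the threshold-graph composition used in \cite{KLM19},'' but that composition is the \emph{second} step of \cite{KLM19} (converting gap $\maxcover$ into gap Set Cover) and plays no role in establishing the $\maxcover$ hardness that Theorem~\ref{thm:ETHmax} asserts; the relevant component is the distributed PCP / SMP framework. So the proposal is not a substitute route but rather a partial, informal rediscovery of the cited framework, with the decisive soundness step left as a hope rather than a construction.
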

\begin{proof}[Proof Sketch]
All the references here are using the labels in~\cite{KLM19}. First we apply Proposition 5.1 to Theorem 7.1 with $z=\frac{1}{\log_2 \left(\frac{1}{1-\delta}\right)}$ to obtain a $(0,O(\log_2 m),O(t),1/2)$-efficient protocol for $k$-player $\mathsf{MultEq}_{m,k,t}$ in the SMP model. 
The proof of the theorem then follows by plugging in the parameters of the protocol  to Corollary 5.5. 
\end{proof}

\begin{sloppypar}
Starting from the above theorem, we   mimics ideas from 
Feige's proof of the \NP-hardness of approximating the {Max Coverage} 
problem \cite{feige1998threshold}. Let $T:=T(k)$ be an integer such that $\rho(Tk)\ge 2\cdot k^{A(k)}$ (such an integer $T$ exists because $\rho$ is unbounded).
Given an instance $\Gamma(G=(V\dot\cup W,E),1,1/2)$ of   $k$-\maxcover, we construct the universe $\U:=\{(t,i,f):t\in[T],i\in [\ell],\ f:W_i\to [k]\}$, and a collection of sets $\CS:=\{S_{(t,j,v)}\}_{t\in[T], j\in [k], v\in V_j}$, where for all $t\in [T]$ we have $(t,i,f)\in S_{(t,j,v)}\iff \exists w\in W_i$ such that $f(w)=j$ and $(v,w)\in E$. Note that $|\U|=T\cdot (\log n)^{O(1)} \cdot k^{A(k)}$ and that $|\CS|=T\cdot k\cdot n$. 
\end{sloppypar}

Suppose there exists  $v_1\in V_1,\ldots ,v_k\in V_k$ such that for all $i\in[\ell]$ we have that $W_i$ is covered by $v_1,\ldots ,v_k$. Let $w_i\in W_i$ be a common neighbor of  $v_1,\ldots ,v_k$. Then, we claim that the collection $\{S_{(t,1,v_1)},S_{(t,2,v_2)},\ldots ,S_{(t,k,v_k)}\}_{t\in [T]}$ covers $\U$. 
This is because for every $t\in[T]$ and every $(t,i,f)\in \U$  we have that $S_{(t,f(w_i),v_{f(w_i)})}$ covers it.

On the other hand, suppose that for every  $v_1\in V_1,\ldots ,v_k\in V_k$  we have that only $1/2$ fraction of the $W_i$s are covered by $v_1,\ldots ,v_k$. Fix some $Tk$ sets $\tilde{\CS}$ in $\CS$. 
For every $t\in [T]$, let $\U_t:=\{(t,i,f):i\in [\ell], f:W_i\to [k]\}$ and $\tilde{\CS}_t:=\tilde{\CS}\cap \{S_{(t,j,v)}:j\in [k], v\in V_j\}$. We can partition $\tilde{\CS}$ to $\tilde{\CS}^-,\tilde{\CS}^=,$ and $\tilde{\CS}^+$, where for every $t\in [T]$ we include all the sets in $\tilde{\CS}_t$ to $\tilde{\CS}^-$ if $|\tilde{\CS}_t|<k$, to $\tilde{\CS}^+$ if $|\tilde{\CS}_t|>k$, and to $\tilde{\CS}^=$ if $|\tilde{\CS}_t|=k$.
Let $R^+\subseteq [T]$ (resp.\ $R^-\subseteq [T]$) be defined as follows: $t\in R^+\iff |\tilde{\CS}_t|>k$ (resp.\ $t\in R^-\iff |\tilde{\CS}_t|<k$). Since $|\tilde{\CS}|=Tk$, we  have that there exists $Q\subseteq [T]\times [k]$ such that $|Q|= |\tilde{\CS}^+|-(|R^+|\cdot k)$ and $(t,j)\in Q\iff \tilde{\CS}_t\cap \{S_{(t,j,v)}: v\in V_j\}=\emptyset$ and $|\tilde{\CS}_t|<k$.

Now, we observe that once we fix $t\in [T]$ and $j\in [k]$ and  if $\tilde{\CS}_t\cap \{S_{(t,j,v)}: v\in V_j\}=\emptyset$ then $\tilde{\CS}$ does not cover any element in the subuniverse $\{(t,i,f_j):i\in [\ell]\}$, where $f_j$ is the constant function which maps everything to $j$. Therefore, we can conclude that there are $|Q|\cdot \ell$ many elements in $\underset{t\in R^-}{\cup}\U_t$ such that are not covered by $\tilde{\CS}$. 

Also, suppose we picked $S_{(t,1,v_1)},S_{(t,2,v_2)},\ldots ,S_{(t,k,v_k)}$ for some $t\in[T]$, $v_1\in V_1,\ldots ,v_k\in V_k$ then for every $i\in [\ell]$ such that $v_1,\ldots ,v_k$ does not cover $W_i$, we have that for every $w\in W_i$ there is some $j\in[k]$ such that $(v_j,w)\notin E$. Therefore, there is some $f:W_i\to[k]$ such that $S_{(t,1,v_1)},S_{(t,2,v_2)},\ldots ,S_{(t,k,v_k)}$ does not cover $(t,i,f)$. Thus, these $k$ sets do not cover at least $\ell/2$ universe elements.

We can now put everything together to complete the soundness analysis. First note that for every $t\in [T]$, no element of the subuniverse $\U_t$ can be covered by a set in $\tilde{\CS}\setminus \tilde{\CS}_t$. For every $t\in [T]$, if $|\tilde{\CS}_t|=k$ then from the above analysis we have that at least $\ell/2$ elements of $\U_t$ are not covered by $\tilde{\CS}$. 

Therefore, the total number of universe elements not covered by $\tilde{\CS}$ is at least $\left(|\tilde{\CS}^=|\cdot \ell/2k\right) + |Q|\cdot \ell=\ell\cdot \left(|\tilde{\CS}^=|/2k+|\tilde{\CS}^+|-(|R^+|\cdot k)\right)$. Since $|\tilde{\CS}^=|=(T-R^+-R^-)\cdot k$, this implies that $\tilde{\CS}$ does not cover at least $\ell\cdot \left(T/2+|\tilde{\CS}^+|-(|R^+|\cdot k)-\frac{R^++R^-}{2}\right)$ elements in $\U$. Next, we note that by  the way we  partitioned  $\tilde{\CS}$, we have $|\tilde{\CS}^+|-(|R^+|\cdot k)=(|R^-|\cdot k)-|\tilde{\CS}^-|$, and we also have $|\tilde{\CS}^+|\ge |R^+|\cdot (k+1)$ and $|\tilde{\CS}^-|\le |R^-|\cdot (k-1)$. From this we can surmise that  $|\tilde{\CS}^+|-(|R^+|\cdot k)\ge \max(|R^+|,|R^-|)\ge \frac{R^++R^-}{2}$. Thus, we have that  $\tilde{\CS}$ does not cover at least $\ell T/2$ universe elements.

Thus, we can conclude that $\tilde{\CS}$ can not cover at least $T\ell/2$ universe elements. This is $k^{-A(k)}/2$ fraction of $\U$ that is not covered by $\tilde{\CS}$. The proof follows by noting that $k^{-A(k)}/2\ge 1/\rho(Tk)=1/\rho(|\tilde{\CS}|)$.

\subsection*{Acknowledgement}
 We are grateful to the Dagstuhl Seminar 23291 for a special collaboration opportunity.
We thank Vincent Cohen-Addad, Venkatesan Guruswami, Jason Li, Bingkai Lin, and Xuandi Ren for discussions.

\bibliographystyle{alpha}
\bibliography{ref}

\end{document}